\documentclass[a4paper,oneside,11pt]{amsart} 
\usepackage{amsmath, amsfonts, amsthm, amssymb, amscd}
\usepackage[mathcal]{euscript} 
\usepackage{dsfont, pxfonts, verbatim}
\usepackage{mathrsfs}  
\newtheorem{theorem}{Theorem}[section]
\newtheorem{proposition}[theorem]{Proposition} 
\newtheorem{lemma}[theorem]{Lemma}
 
\newtheorem{definition}[theorem]{Definition} 
\newtheorem{remark}[theorem]{Remark}  
\numberwithin{equation}{section} 
\usepackage{a4wide}
\newcommand \Acal {\mathcal A}
\newcommand \Ecal {\mathcal E}
\newcommand \tildeU {\widetilde U}
\newcommand \tildef {\widetilde f}
\newcommand \la \langle
\newcommand \ra \rangle

\newcommand \del	    \partial  
\newcommand \auth 	{\textsc}   
\newcommand \Mcal 	{\mathcal M}

\newcommand \Escr 	{\mathscr E}  
\newcommand \Gscr 	{\mathscr G}

\newcommand \Pscr 	{\mathscr P}  
\newcommand \Qscr 	{\mathcal Q}  
\newcommand \Gcal 	{\mathcal G}
\newcommand \Qcal 	{\mathcal Q}  
\newcommand \Fcal       {\mathcal F}  
\newcommand \Escrhat 	{\widehat {\mathscr E}}

\newcommand \RR 		{\mathbb R}   
\newcommand \eps 	\epsilon  
\newcommand \be 		{\begin{equation}}
\newcommand \ee 		{\end{equation}} 
\newcommand{\aver}[1]{\left\la\widetilde{#1} \, \right\ra}

\let\oldmarginpar\marginpar
\renewcommand\marginpar[1]{\-\oldmarginpar[\raggedleft\footnotesize #1]%
{\raggedright\footnotesize #1}}

%==============================================================================================

\begin{document}
 
\title[Future asymptotics of polarized $T^2$-symmetric spacetimes]{Future asymptotics and geodesic completeness
\\
of polarized $T^2$-symmetric spacetimes}  
\author{ 
%%    Spelling:   LeFloch  or LeFLOCH
Philippe G. L{\scriptsize e}Floch}
\address{Laboratoire Jacques-Louis Lions \& Centre National de la Recherche Scientifique, 
Universit\'e Pierre et Marie Curie (Paris 6), 4 Place Jussieu, 75252 Paris, France. 
Blog: {\sl philippelefloch.org.}}
\email{contact@philippelefloch.org}

\author{Jacques Smulevici}
\address{Laboratoire de Math\'ematiques, Universit\'e Paris-Sud 11, B\^at. 425, 91405 Orsay, France.} \email{jacques.smulevici@math.u-psud.fr}
\subjclass[2000]{Primary. 35Q76, 	83C05, 83C20}

\date{August 2014. Revised July 2015.}
\keywords{Einstein equations, T2--symmetry, future expanding spacetime, late-time asymptotics, geodesic completeness}

\begin{abstract} 
We investigate the late-time asymptotics of future expanding, polarized vacuum Einstein spacetimes 
with $T^2$-symmetry on $T^3$, which, by definition, admit two spacelike Killing fields. Our main result is the existence of a stable asymptotic regime within this class, that is, we provide here a full description of the late-time asymptotics of the solutions to the Einstein equations when the initial data set is close to the asymptotic regime. Our proof is based on several energy functionals with lower order corrections (as is standard for such problems) and the derivation of a simplified model that we exhibit here. Roughly speaking, the Einstein equations in the symmetry class under consideration consist of a system of wave equations coupled to constraint equations plus a system of ordinary differential equations. The unknowns involved in the system of ordinary equations are blowing up in the future timelike directions. One of our main contributions is the derivation of novel effective equations for suitably renormalized unknowns. Interestingly, this renormalization is not performed with respect to a fixed background, but does involve the energy of the coupled system of wave equations. In addition, we construct an open set of initial data that are arbitrarily close to the expected asymptotic behavior. 
We emphasize that, in comparison, the class of Gowdy spacetimes exhibits a very different dynamical behavior to the one we uncover in the present work for general polarized $T^2$--symmetric spacetimes. 
Furthermore, all the conclusions of this paper are valid within the framework of weakly $T^2$-symmetric spacetimes previously introduced by the authors.
\end{abstract}

\maketitle

%\tableofcontents  

%==============================================================================================
  
\section{Introduction}

This is the third of a series of papers \cite{LeFlochSmulevici1, LeFlochSmulevici2} devoted to the study of weakly regular, $T^2$--symmetric vacuum spacetimes. There has been extensive work on the mathematical analysis of $T^2$--symmetric spacetimes with high regularity and we refer for instance to the introduction of  \cite{Smulevici2} for related literature. Our motivation in studying these spacetimes is two-fold. First of all, given the high degree of symmetry, one can study these solutions under much weaker regularity than in the general case. In \cite{LeFlochSmulevici1}, we introduced the notion of weakly regular, $T^2$--symmetric, vacuum spacetime and we established a future expanding, global existence theory in the so-called areal coordinates ---generalizing a previous result in the smooth setup \cite{BergerChruscielIsenbergMoncrief}. Our notion of weakly regular spacetimes extended a notion first proposed by Christodoulou \cite{Christodoulou}  (see also \cite{LeFlochSormani}) for radially symmetric spacetimes and later by LeFloch and co-authors  \cite{LeFloch,LeFlochMardare,LeFlochRendall,LeFlochStewart,LeFlochStewart2} for Gowdy symmetric spacetimes. See also the more recent developments in \cite{GL1,GL2}.

Our second motivation comes from the fact that,
 apart from special cases (see, for instance, 
\cite{ChruscielIsenbergMoncrief,Ringstrom, Ringstrom2}), a complete description of the late-time asymptotics of $T^2$ symmetric spacetimes has not been given yet {\sl even for smooth initial data sets.} In fact, the techniques available until now  provide the existence of future developments, but are not sufficient to prove that these spacetimes are future geodesically complete or not.

Recall that a $T^2$-symmetric, vacuum spacetime is a solution to the vacuum Einstein equations $Ric(g)=0$ arising from an initial data set which is assumed to be invariant under an action of the Lie group $T^2$. We are concerned here with the study of $T^2$-symmetric spacetime arising from initial data given on $T^3$. For such spacetimes, it is known \cite{Chrusciel} that, unless the spacetime is flat (and therefore the solution is trivial) the area of the orbits of symmetry, say $R$, admits a timelike gradient and, therefore, can be used as time coordinate and leads one to define the so-called areal gauge. By convention, we can choose the time direction so that $R$ increases toward the future. In the present paper, we restrict attention to \emph{polarized} $T^2$-symmetric spacetimes, which are $T^2$-symmetric spacetimes for which the Killing fields generating the $T^2$ symmetry can be chosen to be mutually orthogonal. 

Our main result is a complete description of the future time-asymptotics of polarized, $T^2$-symmetric, vacuum spacetimes, under the assumption that one starts sufficiently close to the expected asymptotic regime. As a consequence, it follows that these spacetimes are future geodesically complete. We refer to Theorem \ref{th:mt} and Theorem \ref{th:gc} for precise statements. These results are new even for smooth initial data, but we also emphasize that all of our estimates are valid within the framework of weakly regular $T^2$-symmetric spacetimes introduced in \cite{LeFlochSmulevici1}.

Prior to the present work, two important subclasses of $T^2$-symmetric solutions were studied in the literature. First of all, when the initial data set is invariant not only by an action of $T^2$ on $T^3$ 
but by the action of $T^3$ on itself, then the spacetime is homogeneous, i.e.~admits three independent spatial Killing fields. The Einstein equations then reduce to a set of ordinary differential equations. 
Second, another subclass of solutions is the class of Gowdy spacetimes, which, by definition, are $T^2$-symmetric solutions for which the family of $2$-planes orthogonal to the orbits of symmetry is integrable. One of the main differences between the Gowdy solutions and the general $T^2$-symmetric solutions is that the equations in areal gauge are semi-linear in the Gowdy case, while they are quasi-linear in general.  The future time-asymptotics of Gowdy spacetimes were derived by Ringstr\"om \cite{Ringstrom} (see also \cite{ChruscielIsenbergMoncrief} for polarized Gowdy spacetimes). 

The following  question thus arises. Are the asymptotics of homogeneous $T^2$-symmetric or Gowdy spacetimes {\sl stable within} the whole set of $T^2$-symmetric solutions? For homogeneous solutions, it turns out that there are not even stable within the class of Gowdy spacetimes \cite{Ringstrom}. As far as Gowdy spacetimes are concerned, the asymptotics derived in the present work
show that they are not stable within the set of $T^2$-symmetric solutions. For instance, according to Theorem \ref{th:mt} in Section \ref{se:ar}, the norm of the gradient of $R$ behaves like $R^{-2}$, while it decays exponentially in the Gowdy case. 
Of course, one question which remains open is whether the future asymptotic behavior that we uncover here is stable, first within the whole class of $T^2$-symmetric solutions (i.e~for non-polarized solutions) and, then, within the class of solutions arising from arbitrary initial data defined on $T^3$. We observe that many of the estimates we prove below can be generalized to the non-polarized case. 

Independently of this work, Ringstr\"om \cite{Ringstrom3} has recently obtained interesting and complementary results on $T^2$-symmetric spacetimes. His main results can be summarized as follows. For any $T^2$-symmetric spacetime which is non-flat and non-Gowdy, there is a certain geometric quantity\footnote{In the notation of this paper, it coincides with the quantity $\Pscr$ introduced in \eqref{defPcal} in Section \ref{se:hgs}.} which, if bounded as $R \rightarrow +\infty$,  implies that the solution is homogeneous. This result does not give sharp asymptotics on the solutions, but it is a large data result and therefore, it is so far the strongest result available for $T^2$-symmetric spacetimes with arbitrary data. It implies, in particular, that the asymptotics of non-Gowdy, non-homogeneous solutions are quite different from the asymptotics of homogeneous or Gowdy solutions. A second set of results proved in \cite{Ringstrom3} concerns polarized $T^2$-symmetric under a smallness assumption (which is slightly different from the initial data assumption that we make here). A partial set of asymptotics is then obtained therein, while in the present work, we derive a full set of late-time asymptotics; it is interesting to point out that the methods of proof appear to be quite different.

The rest of this paper is organized as follows. In the following section, we introduce standard material on $T^2$-symmetric and polarized solutions, which we will use throughout. In particular, we recall the global existence of areal foliation for weakly regular initial data established in \cite{LeFlochSmulevici1}. Apart from this result, this paper is essentially self-contained. We conclude the preliminary section by presenting the general strategy that we will use in order to derive the asymptotics. In Section \ref{se:evmv}, we derive some formulas for the evolution of certain mean values and we also provide some estimates about the commutator associated with the time derivative operator and the spatial average operator. Section \ref{se:mef} is devoted to the analysis of the corrected energy. In Section \ref{se:dsrq}, we introduce several renormalized unknowns, derive a system of evolution equations for them, and provide estimates on various error terms arising in the analysis. In the following section, we introduce and close a small bootstrap argument, linking all the previous estimates together. In Section \ref{se:ar} and \ref{se:gc}, we present and give the proofs of the main results of this paper, concerning the full set of asymptotics and the geodesic completeness of these spacetimes, respectively. Finally, in a final section, we construct an open set of initial data satisfying the assumptions of Theorem \ref{th:mt}. 
 
%======================================================================

\section{Preliminaries on $T^2$-symmetric polarized solutions} 

\subsection{Einstein equations in areal coordinates}

Let $(\Mcal,g)$ be a weakly regular $T^2$-symmetric spacetime, understood in the sense introduced by the authors in \cite{LeFlochSmulevici1}. From the existence theory therein, we know that, if $R: \Mcal \to \RR$ denotes the area of the orbits of the symmetry group,
then its gradient vector field $\nabla R$ 
is timelike (and future oriented thanks to the standard normalization adopted in \cite{LeFlochSmulevici1}) 
and, consequently, the area can be used as a time coordinate. In 
these {\bf areal coordinates,}  the variable $R$ exhausts the interval $[R_0, +\infty)$,
where $R_0 > 0$ is the (assumed) constant value of the area on the initial slice, 
and the metric takes the form  
\be 
\aligned  
g = \, & e^{2(\eta-U)} \, \big( - dR^2 + a^{-2} \, d\theta^2 \big) 
 +e^{2U} \big(
dx + A \, dy + (G + AH) \, d\theta \big)^2
+ e^{-2U} R^2 \big( dy + H \, d\theta \big)^2.  
\endaligned 
\ee
Here, the independent variables $x,y$, and $\theta$ belong to $S^1$ (the one-dimensional torus or circle) and the metric coefficients $U,A,\eta, a, \displaystyle G$, and $H$ are functions of $(R,\theta)$, only. We will, for convenience in the presentation, identify $S^1$ with the interval $[0, 2\pi]$ and functions defined on $S^1$ with $2\pi$ periodic functions. The vector fields $\partial_x$ and $\partial_y$ are Killing fields for the above metric and so are any linear combinations of $\partial_x$ and $\partial_y$. 

We are interested here in \emph{polarized} $T^2$-symmetric spacetimes, defined as follows. 

\begin{definition}
A $T^2$-symmetric spacetime is said to be polarized if one can choose linear combinations $X,Y$ of the vector fields $\partial_x$, $\partial_y$ generating the $T^2$ symmetry such that $g(X,Y)=0$.
\end{definition}
For a polarized spacetime, it follows that the metric can be rewritten (possibly after a change of the coordinates $x,y$) as 
\be
\label{300} 
\aligned  
g = \, & e^{2(\eta-U)} \, \big( - dR^2 + a^{-2} \, d\theta^2 \big) 
 +e^{2U} \big(
dx + G \, d\theta \big)^2
+ e^{-2U} R^2 \big( dy + H \, d\theta \big)^2.
\endaligned 
\ee

Now, the Einstein equations for $T^2$-symmetric spacetimes written in areal coordinates have been derived in \cite{BergerChruscielIsenbergMoncrief} for smooth solutions (see also \cite{Chrusciel} for the existence of areal time). In \cite{LeFlochSmulevici1}, we introduced the {\bf weak version of the Einstein equations} for weakly regular $T^2$-symmetric spacetimes and we proved that, using areal coordinates, 
we could still reduce the Einstein equations to those obtained in \cite{BergerChruscielIsenbergMoncrief}. In the polarized case, we are thus left with the following system of partial differential equations:  
\begin{enumerate}

\item Three evolution equations for the metric coefficients $U, \eta, a$:
\begin{align}
\label{weakform1}
& (R \, a^{-1} U_R)_R - (R \, a \, U_\theta)_\theta 
= 0, 
\\
\label{weakform3} 
&( a^{-1} \eta_R)_R - ( a \, \eta_\theta)_\theta 
= \Omega^\eta 
- {1 \over R^{3/2}} \big( R^{3/2} \big( a^{-1} \big)_R\big)_R\,\, ,
\\
\label{eq:lnalphar}
&( 2 \ln a )_R = - {K^2 \over R^3} \, e^{2 \eta}, 
\end{align} 
where $K$ is a real constant and 
$\Omega^\eta :=  - a^{-1} U_R^2 + a \, U_\theta^2$.

\item Two constraint equations for the metric coefficient $\eta$: 
\begin{eqnarray}
\label{weakconstraintsr1}
\eta_R + {K^2 \over 4 R^3} \, e^{2\eta}&=& a \, RE,\\
\eta_\theta  &=& R \, F, \label{weakconstraintsr2}
\end{eqnarray}
where 
$E:= a^{-1} \, U_R^2 + a \, U_\theta^2$ and 
$F := 2 \, U_R U_\theta$. 

\item Two equations for the twists: 
\be
\label{GHequa} 
G_R = 0, 
\qquad H_R = {K \over R^3}  \, a^{-1} e^{2 \eta}.
\ee
\end{enumerate} 
Here, 
$K$ is the \emph{twist constant} and $K=0$ corresponds geometrically to the integrability of the family of $2$-planes orthogonal to $\partial_x$ and $\partial_y$. The special solutions with $K=0$ are called Gowdy spacetimes (with $T^3$ topology). Since the dynamics of Gowdy spacetimes are well-known \cite{Ringstrom}, we focus here exclusively on the case $K\neq0$.

Note that the metric functions $G$ and $H$ do not appear in the equations apart from \eqref{GHequa}. These latter equations can simply be integrated in $R$, once enough information on their right-hand sides is obtained. They will therefore be ignored in most parts of this paper. Note also that \eqref{weakform3} is actually a redundant equation, i.e.~can be deduced from the other equations\footnote{More precisely, \eqref{weakform3} can be obtained by multiplying \eqref{weakconstraintsr1} and \eqref{weakconstraintsr2} by $a^{-1}$ and $a$ respectively, differentiating the resulting equations in $R$ and $\theta$ and taking their differences before replacing second derivatives of $U$ and first derivatives of $a$ using the evolution equations.}.

Finally, observe that the identity 
\begin{equation}\label{id:ea}
\left( \frac{ e^{2\eta}}{a} \right)_R = 2 R E e^{2\eta},
\end{equation}
will be useful later in this paper; it can be easily derived from the Einstein equations  \eqref{eq:lnalphar} and \eqref{weakconstraintsr1}.

%---------------------------------------------------------------------------------------------------

\subsection{Global existence in areal coordinates}

In \cite{LeFlochSmulevici1}, we proved local and global existence results for general $T^2$-symmetric spacetimes in areal coordinates. In the specific case of polarized $T^2$-symmetric spacetimes, these results imply the following conclusion. 

\begin{theorem}[Global existence theory in areal coordinates]
Fix any constants $K,R_0 >0$. consider any initial data 
$(U_0,U_1) \in H^1(S^1) \times L^2(S^1)$, 
$a_0 \in W^{2,1}(S^1)$, and $\eta_0 \in W^{1,1}(S^1)$ such that $a_0>0$.
 Suppose moreover that the constraint equation \eqref{weakconstraintsr2} is satisfied initially i.e. 
\be
\partial_\theta(\eta_0)  = 2 R_0 \,  U_1 \partial_\theta(U_0).
\ee
Let $\mathcal{C}$ be the class of functions $(U,\eta,a)$ such that 
$$
\aligned
&
U\in C^1([R_0,+\infty), L^2(S^1)) \cap C^0([R_0,+\infty); H^1(S^1)), 
\\
& \eta \in C^0([R_0,+\infty); W^{1,1}(S^1)),
a \in C^0([R_0,+\infty); W^{2,1}(S^1)).
\endaligned
$$
Then there exists a unique solution $(U,\eta,a) \in \mathcal{C}$ of the Einstein equations \eqref{weakform1}-\eqref{weakconstraintsr2} which assumes the given initial data at $R=R_0$, in the sense
\begin{eqnarray*}
U(R_0)&=& U_0, \quad U_R(R_0)=U_1, 
\\
\eta(R_0)&=&\eta_0, \quad a(R_0)=a_0.
\end{eqnarray*}
Moreover, on any compact time interval, the solution can be uniformly approximated by smooth solutions in the norm associated with $\mathcal{C}$.
\end{theorem}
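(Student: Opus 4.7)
The plan is to deduce this statement from the general $T^2$-symmetric global existence theory of \cite{LeFlochSmulevici1}; the polarized case corresponds to initial data for which the off-diagonal metric coefficient $A$ in the general ansatz vanishes identically, a condition that is preserved by the evolution since $A \equiv 0$ is a stationary solution of the corresponding wave equation. It therefore suffices to reproduce the three standard steps---local well-posedness for smooth data, global continuation via a priori estimates, and a density argument extending to weakly regular data---for the reduced polarized system \eqref{weakform1}, \eqref{eq:lnalphar}, \eqref{weakconstraintsr1}.

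For local well-posedness with smooth data, I would use a Picard iteration on a short interval $[R_0, R_0+\delta]$: given approximations $(a^{(n)}, \eta^{(n)})$ with $a^{(n)}>0$, one solves the linear wave equation \eqref{weakform1} for $U^{(n+1)}$ using standard energy estimates, then updates $a^{(n+1)}$ through the first-order ODE \eqref{eq:lnalphar} and $\eta^{(n+1)}$ through the transport equation \eqref{weakconstraintsr1}. Positivity of $a_0$ persists for $\delta$ small, and one shows the iteration contracts in a suitable norm. The constraint \eqref{weakconstraintsr2}, which holds initially, is then propagated by the evolution: one verifies that the error $Q:=\eta_\theta-2R\,U_R U_\theta$ satisfies a linear transport equation in $R$ with vanishing data at $R_0$, forcing $Q\equiv 0$.

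For global continuation, the crucial a priori bounds come from the wave energy
\[
\Ecal(R) := \int_{S^1} \bigl(a^{-1}U_R^2 + a\,U_\theta^2\bigr)\, d\theta,
\]
whose $R$-derivative, computed from \eqref{weakform1} and \eqref{eq:lnalphar}, is controlled by $\Ecal$ itself with weights involving $R^{-1}$ and $e^{2\eta}/R^3$. The key observation is that the identity \eqref{id:ea}, namely $(e^{2\eta}/a)_R = 2RE\,e^{2\eta}$, allows one to express $e^{2\eta}/a$ as an $R$-integral of quantities controlled by $\Ecal$. Integrating in $\theta$ and combining with the monotonicity $a_R\le 0$ from \eqref{eq:lnalphar} and the constraint \eqref{weakconstraintsr1}, a Gronwall-type argument yields uniform upper and lower bounds for $a$ and $e^{2\eta}$ on any compact $R$-interval, as well as polynomial control of $\Ecal(R)$. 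This rules out finite-$R$ blow-up and extends the smooth solution to all of $[R_0, +\infty)$.

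Finally, for data of the low regularity stated in the theorem, I would mollify $(U_0, U_1, a_0, \eta_0)$ to obtain a family of smooth data converging in the natural norms and with constraint error vanishing in $L^1$, apply the preceding step to obtain smooth global solutions, and pass to the limit using the uniformity of the a priori estimates to extract a weak-$\ast$ and strong limit in the class $\mathcal{C}$. Uniqueness follows from a parallel energy argument applied to the difference of two solutions. The hardest step, and the essential contribution of \cite{LeFlochSmulevici1}, is controlling the low-regularity nonlinear quantities: since $U_\theta\in L^2$, the product $U_R U_\theta$ appearing in \eqref{weakconstraintsr2} lies only in $L^1$, and one must rely on integrated identities such as \eqref{id:ea} rather than their differentiated counterparts, together with a careful use of the equations in their weak form, to make sense of the limiting procedure.
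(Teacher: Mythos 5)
Your proposal is correct and takes essentially the same route as the paper, which offers no proof of this theorem beyond observing that it is the specialization of the general $T^2$-symmetric global existence theory of \cite{LeFlochSmulevici1} to the polarized case $A\equiv 0$; your additional sketch (Picard iteration for local existence, propagation of the constraint \eqref{weakconstraintsr2}, energy monotonicity plus the identity \eqref{id:ea} and the sign of $a_R$ for continuation, and a density argument for weakly regular data) is a faithful outline of how that cited theory is established. The only cosmetic remark is that in the approximation step it is simpler to mollify $(U_0,U_1,a_0)$ and then \emph{define} the smooth $\eta_0$ by integrating the constraint, rather than tracking a vanishing constraint error.
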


Since all of our estimates here will be compatible with the density property stated 
at the end of the above theorem, it is sufficient to perform our analysis by assuming our initial data to be smooth. 

%----------------------------------------------------------------------------------------

\subsection{Energy functionals}

Important control on the metric coefficients, mostly on their first-order derivatives,
is obtained by analyzing the energy functionals 
\be
\Escr(R)  := \int_{S^1} E(R,\theta) \, d\theta, \qquad E=  a^{-1} \, U_R^2 + a \, U_\theta^2,
\ee
and 
\be
\aligned 
\Escr_K(R)  :=& \int_{S^1} E_K(R,\theta) \, d\theta,
\qquad
E_K  := \, E + {K^2  \over 4 R^4} \,  a^{-1}  e^{2 \eta}. 
\endaligned 
\ee
Using the Einstein equations \eqref{weakform1}--\eqref{weakconstraintsr2}, it follows that both functionals 
are {\sl non-increasing} in time, with 
\be
\label{decay1}
\aligned
&{d \over dR} \Escr(R) 
= - {K^2 \over 2 R^3} \int_{S^1} E \, e^{2 \eta} \, d\theta 
- {2 \over R} \int_{S^1} 
a^{-1}  \, (U_R)^2  d\theta, 
\\
& {d \over dR} \Escr_K(R) 
= - {K^2 \over  R^5} \int_{S^1} \, a^{-1}e^{2 \eta} \, d\theta 
- {2 \over R} \int_{S^1} 
a^{-1}  \, (U_R)^2   \, d\theta.
\endaligned
\ee 
 As a direct consequence, we have the following result. 

\begin{lemma}[Uniform energy bounds for $T^2$--symmetric spacetimes] 
\label{energ}
The following uniform bounds hold
\be
\label{E400} 
\aligned 
&\sup_{R \in [R_0,+\infty)} \Escr(R) \leq \Escr(R_0), 
\qquad 
\qquad 
\sup_{R \in [R_0,+\infty)} \Escr_K(R) \leq \Escr_K(R_0), 
\endaligned 
\ee
as well as the spacetime bounds
\be
\label{Espacetime}
\int_{R_0}^{+\infty} \int_{S^1} \Big( 
a^{-1}  c_0^U   \, (U_R)^2 
+ 
a \, c_1^U  \, (U_\theta)^2 
 \Big) \, dRd\theta \leq \Escr(R_0), 
\ee
\be
\label{Espacetime2}
\int_{R_0}^{+\infty} {K^2 \over R^5} \int_{S^1} e^{2 \eta} \, a^{-1} \, dR d\theta \leq \Escr_K(R_0),
\ee 
with 
$$
\aligned
& c_0^U := {2 \over R}   + {K^2 \over 2R^3} \, e^{2 \eta}, 
\qquad 
c_1^U := {K^2 \over 2R^3} \, e^{2 \eta}.
\endaligned
$$ 
\end{lemma}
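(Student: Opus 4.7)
The plan is simply to integrate the two monotonicity identities already stated in \eqref{decay1}. Since $a(R_0,\theta)>0$ and the ODE \eqref{eq:lnalphar} gives $(\ln a)_R = -\tfrac{K^2}{2R^3}e^{2\eta}$, which is integrable and bounded on any compact $R$-interval, we have $a>0$ for all $R\geq R_0$. Consequently every integrand on the right-hand sides of \eqref{decay1}---namely $E\,e^{2\eta}$, $a^{-1}(U_R)^2$ and $a^{-1}e^{2\eta}$---is pointwise non-negative, so both maps $R\mapsto \Escr(R)$ and $R\mapsto \Escr_K(R)$ are non-increasing on $[R_0,+\infty)$. This immediately yields the uniform bounds \eqref{E400}.

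For the spacetime bound \eqref{Espacetime}, I would integrate the first identity of \eqref{decay1} over $[R_0,R]$:
$$ \Escr(R_0)-\Escr(R) \; = \; \int_{R_0}^{R}\int_{S^1} \Big( \tfrac{2}{R} a^{-1}(U_R)^2 + \tfrac{K^2}{2R^3}\, e^{2\eta}\,E \Big)\, d\theta\, dR. $$
Expanding $E = a^{-1}(U_R)^2 + a(U_\theta)^2$ and regrouping, the integrand is exactly
$$ \tfrac{2}{R} a^{-1}(U_R)^2 + \tfrac{K^2}{2R^3} e^{2\eta}\, a^{-1}(U_R)^2 + \tfrac{K^2}{2R^3} e^{2\eta}\, a\, (U_\theta)^2 \; = \; a^{-1} c_0^U (U_R)^2 + a\, c_1^U (U_\theta)^2 $$
with $c_0^U, c_1^U$ as in the statement. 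Since $\Escr(R)\geq 0$, letting $R\to +\infty$ gives \eqref{Espacetime}. An identical argument applied to the second identity of \eqref{decay1}, discarding the non-negative $(U_R)^2$ term on its right-hand side, produces \eqref{Espacetime2}.

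The only part that requires genuine computation is the derivation of the two identities in \eqref{decay1} themselves, which the excerpt presents without proof. I would obtain them by differentiating $\Escr$ under the integral sign: for $\partial_R(a^{-1} U_R^2 + a U_\theta^2)$ one uses \eqref{eq:lnalphar} to substitute for $a_R$ and $(a^{-1})_R$, and the wave equation \eqref{weakform1} rewritten as $R a^{-1} U_{RR} = -a^{-1}U_R + \tfrac{K^2}{2R^2}a^{-1}e^{2\eta}U_R \cdot (-1) + (Ra U_\theta)_\theta - a_R R a^{-1} U_R$ to substitute for $U_{RR}$; the cross terms $2aU_\theta U_{R\theta} + 2U_R(aU_\theta)_\theta$ cancel after integration by parts in $\theta$ on $S^1$, leaving only the two non-positive contributions. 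The $\Escr_K$ identity follows by an analogous computation, with the additional step of differentiating the weight $\tfrac{K^2}{4R^4}a^{-1}e^{2\eta}$ using \eqref{id:ea} to handle $\partial_R(a^{-1}e^{2\eta}) = 2RE\,e^{2\eta}$. Once \eqref{decay1} is in place, the lemma is immediate, so there is no substantive obstacle beyond this bookkeeping.
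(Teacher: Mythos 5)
Your argument is correct and is exactly the paper's: the lemma is presented there as a direct consequence of the identities \eqref{decay1}, and your integration over $[R_0,R]$ together with the regrouping of $\tfrac{2}{R}a^{-1}(U_R)^2+\tfrac{K^2}{2R^3}e^{2\eta}E$ into $a^{-1}c_0^U(U_R)^2+a\,c_1^U(U_\theta)^2$ is precisely what that entails. (The paper states \eqref{decay1} without proof, so your closing sketch of its derivation is supplementary; note only the minor slip that expanding $(Ra^{-1}U_R)_R$ yields the single first-order term $-R(a^{-1})_R U_R$, not both that term and an additional $-a_R R a^{-1}U_R$, which does not affect the proof of the lemma itself.)
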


%----------------------------------------------------------------------------------------------

\subsection{Heuristics and general strategy} \label{se:hgs}

To understand the asymptotic behavior of the solutions to wave equations such as \eqref{weakform1}, it is important to note that while, for the flat wave operator in $1+1$ dimensions, there is no decay of solutions, the $R$-weights present in \eqref{weakform1} reflect some expansion of our spacetime and that in general, waves decay on expanding spacetimes.

The general strategy to capture this decay is to first observe that the 
global energy dissipation bound \eqref{Espacetime} associated with the energy functional $\Escr(R)$ gives an integrated energy decay estimate but with weaker weights for $U_\theta$ than for $U_R$ (see the missing $2/R$ in $c_1^U$ compared to $c_0^U$). 
To match the weights between $U_R$ and $U_\theta$, we will work instead
with 
the {\bf modified energy functional} 
\be
\label{ModE} 
\Escrhat(R) := \Escr(R) +  \Gscr^U(R) 
\ee
with 
$$
\aligned
\Gscr^U &:= {1 \over R} \int_{S^1} \big( U - \la U \ra \big) \, U_R \, a^{-1} \, d\theta,  
\endaligned
$$
in which the average $\la f \ra$ of a function $f=f(\theta)$ is not defined with respect to the flat measure $d\theta$ but with respect to a weighted measure $a^{-1} d\theta$, i.e.~
\begin{eqnarray} \label{def:avera}
\la f \ra := {\int_{S^1} f \, a^{-1} d\theta \over \int_{S^1} a^{-1} d\theta}. 
\end{eqnarray}
Our strategy is then to ``trade'' a time-derivative for a space-derivative. This method of proof was previously used in \cite{Ringstrom, ChoquetBruhatMoncrief,ChoquetBruhat}.

The following notation will be useful. 
We introduce the length $\Pscr$ of the circle $S^1$ with respect to the measure $a^{-1}d\theta$, that is, 
\be
\label{defPcal} 
\Pscr(R):=\int^{2\pi}_0 a^{-1} d\theta,
\ee
which we refer to as the {\bf perimeter.} The geometric interpretation of this quantity is that the principal symbol of the wave operator appearing in the wave equation \eqref{weakform1} for $U$ is that of the $2$ dimensional metric
$$
ds^2=-dR^2+a^{-2}d\theta^2.
$$
Thus, $\Pscr(R)$ is the volume of the constant-$R$ slices for this metric.

Naively, one may expect the following behavior as $R \to +\infty$. 
In view of the energy identity \eqref{decay1} satisfied by $\Escr$ and focusing on the second integral term, 
one may expect that 
$$
{d \over dR} \Escr \leq - {2 \over R} \Escr \qquad \text{ (modulo higher order terms),} 
$$
so that $\Escr$ should decay like $\frac{1}{R^2}$. 
This behavior is indeed correct for spatially homogeneous spacetimes, as can be checked directly.  
However, for non-spatially homogeneous solutions, a space-derivative must be recovered from a time-derivative, using the corrected energy $\Escrhat$ defined in \eqref{ModE}, as we already explained above.
This would lead to a rate of decay determined by 
$$
{d \over dR} \Escrhat(R) \leq - {1 \over R} \Escrhat(R) \qquad \text{ (modulo higher order terms),} 
$$
so that $\Escrhat$ should decay like $\frac{1}{R}$. If one can then check that the correction term in $\Escrhat$ is of order $o(1/R)$, it should follow that $\Escr(R)$ is of order $1/R$. This is indeed the rate of decay established by Ringstr\"om~\cite{Ringstrom} for (sufficiently regular) Gowdy spacetimes. 

For the more general class of spacetimes under consideration in the present paper, and 
due to the variation of the metric coefficients $a$ and $\eta$, the
behavior $\Escr \sim 1/R$ is {\sl not} consistent with the field equations, as we now check formally.
At this stage of the discussion, 
we are working under the (later invalidated, below) assumption that the first term in \eqref{decay1} is negligible, 
say specifically 
\be
\label{407}
{||e^{2 \eta}||_{L^\infty(S_1)} \over 2R^3} \lesssim {1 \over R^{1 + \eps}}, \qquad \eps >0. 
\ee
From \eqref{eq:lnalphar} we would deduce 
$$
\aligned
(\ln a)_R & = - K^2 \, {e^{2 \eta} \over 2R^3} \in L^1_R, 
\endaligned
$$
hence the coefficient $a$ would then admit a finite limit as $R \to +\infty$. Next, in view of  
$$
\eta_R  = - \frac{K^2}{2} \, {e^{2 \eta} \over 2R^3}  + a \, RE,
$$
in which $\int_{S^1}R E d\theta $ is bounded thanks to our energy assumption, 
it would then follow that $\int_{S^1}\eta_R$ behaves like $1$ and thus $\int_{S_1}\eta \sim R$ (modulo a multiplicative constant). 
In turn, this invalidates our original assumption \eqref{407}. 

This means that the first term in \eqref{decay1} should not be neglected and that it contributes significantly to the energy decay. 
We will prove that, modulo an error term due to the spatial variation of $\eta$, this term can be rewritten as $-\frac{\Pscr_R}{\Pscr} \Escr$, where $\Pscr$ is the perimeter defined by \eqref{defPcal}.

Taking this into account, it follows, assuming that all the error terms can be controlled, that the {\bf rescaled energy}  
\be
\Fcal := \Pscr \, \Escrhat
\ee
should decay like $\frac{1}{R}$ and, in other words, the energy $\Escrhat$ should decay like $\frac{1}{\Pscr R}$. This brings more decay into our analysis, provided the perimeter $\Pscr$ is {\sl growing as $R \to +\infty$} ---as 
we will actually show later. Indeed, we will establish that the perimeter and metric coefficients have the following asymptotic behavior (possibly up to multiplicative constants):  
\be
\label{444}
\aligned
\Pscr(R) & \sim R^{1/2}, \qquad &&\Pscr_R(R) \sim R^{-1/2}, 
\\
e^{2\eta} & \sim R^2, \qquad &&a         \sim R^{-1/2}.
\endaligned
\ee
For the energy, we will therefore have $\Escr \sim R^{-3/2}$. Surprisingly, all the multiplicative constants in the above asymptotic behavior are linked to each other. For instance, we will show that $R^2 \Pscr^{-1} \Escr \rightarrow 5/4$ as $R \rightarrow +\infty$. One of the main difficulties lies in fact in trying to understand these relations. Thus, our work really consists of three main ingredients: 
\begin{enumerate}
\item a version of the corrected energy functionals adapted to polarized $T^2$-symmetric spacetimes (Sections \ref{se:evmv} and \ref{se:mef}),
\item a derivation and analysis of a dynamical system to understand the interplay between $\Pscr$ and the energy functionals (Section \ref{se:dsrq}),
\item and estimates on all the error terms involved in the above two steps and the interplay between all the previous estimates. Since all the estimates involved in the above estimates depend on each other, we use a small bootstrap argument to obtain closure (Section \ref{se:sdp}).
\end{enumerate}
Once these elements have been obtained, deriving the asymptotics of the solutions consists mostly in revisiting the previous estimates in the proper order (see Section \ref{se:ar}, below). Finally, 
in Section \ref{se:gc},
we prove the geodesic completeness by using the approach already developed in \cite{LeFlochSmulevici2}.

%===============================================================

\section{Evolution of the mean values} \label{se:evmv}
\label{section-ME} 

\subsection{The length variable} 

In addition to the perimeter $\Pscr(R)$ introduced in \eqref{defPcal}, the metric coefficient $a$ also determines a {\bf length function}  
\be
\vartheta(\theta,R) :=\int^\theta_0 a^{-1} d\theta, \qquad \theta \in S^1, 
\ee
and its inverse  $\Theta=\Theta(\vartheta,R)$ (for each fixed $R$). In other words, we set 
$\Theta(\vartheta(\theta,R),R) = \theta$ for all $\theta \in S^1$, 
so that
\be
\Theta(\vartheta,R) = \int^\vartheta_0 a (\Theta(\vartheta',R),R) \, d\vartheta',
\qquad 
\Theta( \Pscr(R),R) =2\pi.
\ee

Using the change of variable determined by the length function, we can parameterize any function $f= f(R,\theta)$ 
into $\tildef=\tildef(R,\vartheta)$, defined by
\be
\label{tildenotat}
\tildef(R,\vartheta) := f\left(R,\Theta(\vartheta,R)\right). 
\ee
This is nothing but a change of coordinates from $(R,\theta)$ to $(R,\vartheta)$, but we insist on keeping the "tilde notation" in order to avoid confusion (when taking averages and $R$ derivatives).

The average of any $L^1(S^1)$ function $f$ is now naturally computed with respect to the measure $d\vartheta$, that is,  
\be
\aligned
\la \tildef(R) \ra
:= &\frac{1}{\Pscr(R)} \int^{\Pscr(R)}_0 \tildef(R) \, d\vartheta.
\\
 = &\frac{1}{\Pscr(R)}\int^{2\pi}_0 f(R) \, a(R)^{-1} d\theta= \la f(R) \ra,
\endaligned
\ee
which, as stated, obviously coincides with $\la f(R) \ra$ as defined by \eqref{def:avera}.
Note that the periodicity property is preserved in the new variable, that is, 
$$
\tildef (R,\vartheta+\Pscr(R))=\tildef(R,\vartheta).
$$
for all relevant values of $R$ and $\vartheta$. 

Using the above notation, we can for instance rewrite the correction $\Gscr^U$ introduced in \eqref{ModE} in the form 
\be
\label{eq:guvartheta}
\aligned
\Gscr^U(R) :=& \frac{1}{R}\int^{\Pscr(R)}_0 \left( \tildeU(R) - \la\tildeU(R)\ra\right) \tildeU_R (R) \,  d\vartheta.
\endaligned
\ee 
This form has some advantages when differentiating with respect to $R$, since it directly involves the perimeter and its derivative, which have a geometric meaning. 

%-------------------------------------------------------------------------------------------------

\subsection{Derivatives of the mean values}  

We will be taking time-derivatives of the above quantities but since the time-derivative operator and the spatial averaging operator {\sl do not commute}, an analysis of the corresponding ``commutator'' will be required.
The following properties will be used throughout the rest of this article.

\begin{lemma}[General identities for the mean values] 
\label{meanv} 
For any (sufficiently regular) function $f=f(R,\theta)$,  
one has 
$$
\aligned
\frac{d}{dR}\aver{f} & =\aver{f_R}+\frac{K^2}{2R^3}\aver{fe^{2\eta}}
-\frac{\Pscr_R }{\Pscr}\aver{f},
\\
\frac{d}{dR} \Big( \Pscr \aver{f} \Big) & = \Pscr \aver{f_R}
+ \Pscr \frac{K^2}{2R^3}\aver{fe^{2\eta}}, 
\endaligned
$$
in which $\tildef$ is defined by \eqref{tildenotat}.
\end{lemma}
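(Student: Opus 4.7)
The proof plan is a direct computation that reduces to the evolution equation \eqref{eq:lnalphar} for the coefficient $a$. The key observation is that $(2\ln a)_R = -K^2 e^{2\eta}/R^3$ can be rewritten, after multiplying by $-a^{-1}$ and rearranging, as
\begin{equation*}
(a^{-1})_R = \frac{K^2}{2R^3} \, e^{2\eta} \, a^{-1}.
\end{equation*}
This is the only nontrivial structural input; everything else is calculus on the circle.

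First I would compute $\Pscr_R$ by differentiating \eqref{defPcal} under the integral sign and substituting the identity above, which yields
\begin{equation*}
\Pscr_R = \int_0^{2\pi} (a^{-1})_R \, d\theta = \frac{K^2}{2R^3} \int_0^{2\pi} e^{2\eta} a^{-1} d\theta = \Pscr \cdot \frac{K^2}{2R^3} \la e^{2\eta} \ra,
\end{equation*}
by the very definition \eqref{def:avera} of the weighted average. Note in passing that this already gives a clean expression for $\Pscr_R/\Pscr$ in terms of an average, which will be used implicitly.

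Next I would tackle the first identity by using the original integral representation $\la f \ra = \Pscr^{-1}\int_0^{2\pi} f\, a^{-1}\, d\theta$ (rather than the $\vartheta$-formulation, which would require tracking derivatives of $\Theta(\vartheta,R)$). Applying the quotient rule and differentiating under the integral gives
\begin{equation*}
\frac{d}{dR}\la f \ra = -\frac{\Pscr_R}{\Pscr^2}\int_0^{2\pi} f a^{-1} d\theta + \frac{1}{\Pscr}\int_0^{2\pi} \bigl(f_R a^{-1} + f (a^{-1})_R\bigr) d\theta,
\end{equation*}
and substituting $(a^{-1})_R = (K^2/2R^3) e^{2\eta} a^{-1}$ converts the last integral into $\la f_R \ra + (K^2/2R^3)\la f e^{2\eta}\ra$, producing the first claimed identity. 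The second identity then follows immediately: expand $\frac{d}{dR}(\Pscr\la f\ra) = \Pscr_R \la f\ra + \Pscr \frac{d}{dR}\la f\ra$, substitute the first identity, and observe that the term $\Pscr_R \la f\ra$ cancels precisely the $-(\Pscr_R/\Pscr)\la f\ra$ contribution, leaving the stated formula.

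There is no real obstacle here; the only bookkeeping point is the interpretation of $\tilde f$ versus $f$ in the averages, which is harmless since $\la \tilde f\ra = \la f\ra$ as already noted in the text preceding the statement. The calculation is entirely mechanical once the identity for $(a^{-1})_R$ is in hand.
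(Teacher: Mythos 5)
Your proposal is correct and follows essentially the same route as the paper: differentiate the representation $\aver{f}=\Pscr^{-1}\int_0^{2\pi}f\,a^{-1}\,d\theta$ directly, substitute $(a^{-1})_R=\frac{K^2}{2R^3}e^{2\eta}a^{-1}$ coming from \eqref{eq:lnalphar}, and obtain the second identity from the first by the product rule. The handling of the $\tildef$ versus $f$ bookkeeping is also exactly as in the text.
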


\begin{proof} From the definition 
$$
\aver{f}=\frac{1}{\Pscr}\int^{\Pscr}_0\tildef d\vartheta
=\frac{1}{\Pscr }\int^{2\pi}_0 f(R,\theta) a^{-1}d\theta,
$$
we deduce that 
$$
\aligned
\frac{d}{dR} \aver{f} 
=& \aver{f_R}+\frac{1}{\Pscr }\int^{2\pi}_0f (a^{-1})_Rd\theta -\frac{\Pscr_R }{\Pscr}\aver{f}
\\
=&  \aver{f_R}+ \frac{1}{\Pscr }\int^{2\pi}_0 f \frac{K^2 e^{2\eta}}{2R^3} a^{-1} d\theta
-\frac{\Pscr_R }{\Pscr}\aver{f} 
\\
=&  \aver{f_R}+\frac{K^2}{2R^3}\aver{fe^{2\eta}}
-\frac{\Pscr_R }{\Pscr}\aver{f},
\endaligned
$$
which leads us to the two identities stated in the lemma. 
\end{proof}

The above lemma allows us to derive the following estimate. 

\begin{lemma}[Commutator estimate] 
\label{601}
The commutator associated with the time-differentiation and averaging operators satisfies 
for all functions $f$ 
$$
\left| \frac{d}{dR}\aver{f}- \aver{f_R} \right| \le 
\frac{ \pi K^2}{R^3} \, \aver{|f|} \big\| \big( e^{2\eta} \big)_\theta \big\|_{L^1(S^1)}.
$$
\end{lemma}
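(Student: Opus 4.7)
The plan is to start from the first identity in Lemma~\ref{meanv}, namely
\[
\frac{d}{dR}\aver{f} - \aver{f_R} = \frac{K^2}{2R^3}\,\aver{fe^{2\eta}} - \frac{\Pscr_R}{\Pscr}\,\aver{f},
\]
and to show that its right-hand side is a weighted covariance that can be controlled by the spatial oscillation of $e^{2\eta}$. The first step is to compute $\Pscr_R/\Pscr$ explicitly. Differentiating $\Pscr(R)=\int_{S^1}a^{-1}d\theta$ under the integral sign and using the Einstein equation \eqref{eq:lnalphar} in the form $(a^{-1})_R = \tfrac{K^2}{2R^3}\,a^{-1}e^{2\eta}$, one obtains $\Pscr_R = \tfrac{K^2}{2R^3}\Pscr\,\aver{e^{2\eta}}$, and therefore $\Pscr_R/\Pscr = \tfrac{K^2}{2R^3}\,\aver{e^{2\eta}}$. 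Substituting this back into the identity above collapses both correction terms into
\[
\frac{d}{dR}\aver{f} - \aver{f_R} = \frac{K^2}{2R^3}\,\Big(\aver{fe^{2\eta}}-\aver{f}\,\aver{e^{2\eta}}\Big),
\]
which is exactly the weighted covariance of $f$ and $e^{2\eta}$, and can equivalently be written as $\tfrac{K^2}{2R^3}\,\aver{f\,\big(e^{2\eta}-\aver{e^{2\eta}}\big)}$.

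Next I would bound this by $\tfrac{K^2}{2R^3}\,\aver{|f|}\,\|e^{2\eta}-\aver{e^{2\eta}}\|_{L^\infty(S^1)}$, by pulling the $L^\infty$ norm of the second factor out of the weighted average. To finish, a standard oscillation estimate on $S^1$ yields $\|e^{2\eta}-\aver{e^{2\eta}}\|_{L^\infty(S^1)}\le \|(e^{2\eta})_\theta\|_{L^1(S^1)}$. This is obtained by writing, for each $\theta$,
\[
e^{2\eta}(\theta)-\aver{e^{2\eta}} \;=\; \frac{1}{\Pscr}\int_0^{2\pi}\!\!\Big(e^{2\eta}(\theta)-e^{2\eta}(\theta')\Big)\,a^{-1}(\theta')\,d\theta',
\]
and bounding each difference by the fundamental theorem of calculus along the arc of $S^1$ joining $\theta'$ to $\theta$. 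Combining these two bounds yields the desired estimate (the factor $\pi$ appearing in the statement is a generous overestimate, leaving ample room for the sharper Poincar\'e-type constant).

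The only real obstacle is conceptual rather than technical: one must recognize that the two seemingly unrelated correction terms $\tfrac{K^2}{2R^3}\aver{fe^{2\eta}}$ and $\tfrac{\Pscr_R}{\Pscr}\aver{f}$ combine to form a covariance, which happens precisely because the evolution equation \eqref{eq:lnalphar} for $a^{-1}$ is itself proportional to $e^{2\eta}$ with the right prefactor $K^2/(2R^3)$. Once this cancellation is spotted the estimate reduces to a one-line oscillation inequality on the circle, and no further structure of the Einstein equations is needed.
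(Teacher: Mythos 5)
Your proposal is correct and follows essentially the same route as the paper: both start from the identity in Lemma~\ref{meanv}, use $(a^{-1})_R=\tfrac{K^2}{2R^3}a^{-1}e^{2\eta}$ to recognize the right-hand side as the weighted covariance $\tfrac{K^2}{2R^3}\aver{f\,(e^{2\eta}-\aver{e^{2\eta}})}$, and then bound the oscillation of $e^{2\eta}$ by $\|(e^{2\eta})_\theta\|_{L^1(S^1)}$ via the fundamental theorem of calculus on the circle. Your observation that the constant $\pi$ is generous (the argument actually gives $K^2/(2R^3)$) is consistent with the paper's own computation.
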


\begin{proof} From the above lemma, the expression of $\Pscr_R$, and the evolution equation satisfied by $a$, 
we deduce 
$$
\aligned
&
\Big| \frac{d}{dR}\aver{f} -  \aver{f_R}\Big|
\\
& \leq
 \frac{K^2}{2R^3\Pscr^2} \int^{2\pi}_0 |f|\, a^{-1}(R,\theta) 
\left|\, e^{2\eta}(R,\theta) \Pscr-\int^{2\pi}_0 e^{2\eta}a^{-1}(R,\theta')\, d\theta' \right| d\theta
\\
& \leq 
 \frac{ \pi K^2}{R^3\Pscr} \aver{|f|} \sup_{\theta \in S^1} 
\left|\, e^{2\eta}(R,\theta) \Pscr-\int^{2\pi}_0 e^{2\eta}(R,\theta') a^{-1}(R,\theta') d\theta'\right|
\endaligned
$$
with 
$$
\aligned
\sup_{\theta \in S^1} \left|\, e^{2\eta}(R,\theta) \Pscr
-\int^{2\pi}_0 e^{2\eta}(R,\theta')\, a^{-1}(R,\theta')\,d\theta' \right|
& \le \Pscr\left( \sup_{S^1} e^{2 \eta} - \min_{S^1} e^{2\eta} \right)
\\
& \le \Pscr \, \big\| \big( e^{2\eta} \big)_\theta \big\|_{L^1(S^1)}.
\endaligned
$$
\end{proof}

The following conserved quantity will also be useful in our analysis. It follows simply after a global integration in space of the wave equation \eqref{weakform1} and an integration in $R$ on $[R_1,R]$.

\begin{lemma}
\label{eq:mvar}
For all $R\ge R_1$, the following conservation law holds: 
$$
\aligned
 R \Pscr \, \aver{U_R}
= R_1 \Pscr(R_1) \, \aver{U_R}(R_1).
\endaligned
$$
\end{lemma}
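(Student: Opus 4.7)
The plan is to integrate the wave equation \eqref{weakform1} over the spatial circle and exploit periodicity. Rewriting \eqref{weakform1} as
$$
(R \, a^{-1} U_R)_R = (R \, a \, U_\theta)_\theta,
$$
and integrating in $\theta$ over $S^1$ against the flat measure $d\theta$, the right-hand side vanishes since it is the $\theta$-derivative of a smooth $2\pi$-periodic function. Consequently,
$$
\frac{d}{dR} \int_{S^1} R \, a^{-1} U_R \, d\theta = 0,
$$
so the quantity $R \int_{S^1} a^{-1} U_R \, d\theta$ is independent of $R$.

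To conclude, I would identify this conserved quantity with $R \, \Pscr \, \aver{U_R}$. By the definition \eqref{def:avera} of the weighted average and the definition \eqref{defPcal} of the perimeter,
$$
\Pscr(R) \, \aver{U_R}(R) = \int_{S^1} U_R(R,\theta) \, a^{-1}(R,\theta) \, d\theta,
$$
so that $R \, \Pscr \, \aver{U_R}$ is constant in $R$; integrating from $R_1$ to $R$ yields the stated identity.

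There is no real obstacle here: the only point requiring care is justifying the interchange of integration and $R$-differentiation, and the vanishing of the boundary term from the $(Ra U_\theta)_\theta$ integral, for weakly regular solutions. However, by the density statement at the end of the global existence theorem, it suffices to perform the computation for smooth approximations, for which both steps are immediate, and then pass to the limit using the continuity in $R$ of the quantities involved.
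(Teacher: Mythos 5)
Your proof is correct and follows exactly the paper's argument: integrate the wave equation \eqref{weakform1} over $S^1$ so the $\theta$-derivative term vanishes by periodicity, then identify $\int_{S^1} U_R\, a^{-1}\, d\theta$ with $\Pscr\,\aver{U_R}$ and integrate in $R$. The remark on reducing to smooth data by density matches the paper's stated convention, so nothing is missing.
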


%===============================================================

\section{Evolution of the modified energy functional}
\label{se:mef} 

\subsection{Evolution of correction terms}

Using Lemma~\ref{meanv}, we can compute the time derivative of the corrector $\Gscr^U$  in \eqref{eq:guvartheta}, indeed: 
\begin{eqnarray}
\frac{d}{dR}\Gscr^U &= -\frac{1}{R}\Gscr^U&+\frac{1}{R}\left(\int^{2\pi}_0\left( U - \aver{U} \right) U_R a^{-1}d\theta \right)_R \nonumber \\
&=-\frac{1}{R}\Gscr^U&+\frac{1}{R}\int^{2\pi}_0 U_{R}^2 a^{-1}d\theta \nonumber\\
&&+\frac{1}{R}\int^{\Pscr}_0\left(-\aver{U_R}-\frac{K^2}{2R^3}\aver{Ue^{2\eta}}+\frac{\Pscr_R}{\Pscr}\aver{U}\right)\tildeU_R d\vartheta \nonumber\\
&&+\frac{1}{R} \int^{2\pi}_0\left( U - \aver{U} \right)  ( U_R a^{-1} )_R d\theta,  \nonumber
\end{eqnarray}
so that, by using the field equation \eqref{weakform1} satisfied by $U$,  
\begin{eqnarray} 
\frac{d}{dR}\Gscr^U &=&-\frac{1}{R}\Gscr^U+\frac{1}{R}\int_0^{\Pscr} \widetilde{U_R}^2 d\vartheta \nonumber\\
&&-\frac{\Pscr}{R}\left( \aver{U_R} \right)^2 -\frac{K^2}{2R^4}\Pscr\aver{U_R}\aver{Ue^{2 \eta }}+\frac{\Pscr_R}{R}\aver{U}\aver{U_R} \nonumber\\
&&+\frac{1}{R}\int^{2\pi}_0\left( U - \aver{U} \right) \left( -\frac{U_R a^{-1}}{R}+\left( a U_\theta \right)_\theta\right) d\theta. \nonumber
\end{eqnarray}
Integrating by parts the last term, we obtain
\begin{eqnarray*} 
&&-\frac{2}{R}\Gscr^U+\frac{1}{R}\int_0^{\Pscr} \widetilde{U_R}^2 d\vartheta-\frac{1}{R}\int_0^{\Pscr}{\tilde{U}_{\vartheta}^2}d\vartheta\\
&&-\frac{\Pscr}{R}\left( \aver{U_R} \right)^2-\frac{K^2}{2R^4}\Pscr\aver{U_R}\aver{Ue^{2 \eta }} +\frac{\Pscr_R}{R}\aver{U}\aver{U_R}. 
\end{eqnarray*}
After re-organizing some of the terms, this leads us to  
\be
 \label{eq:dgammau}
\aligned 
\frac{d}{dR}\Gscr^U  
&=-\frac{1}{R}\int_0^{\Pscr}{\tilde{U}_{\vartheta}^2}d\vartheta +\frac{1}{R}\int_0^{\Pscr} \widetilde{U_R}^2 d\vartheta  \\
& \quad -\frac{1}{R}\Gscr^U-\frac{\Pscr_R}{\Pscr}\Gscr^U 
+\Omega_{\Gscr^U}, 
\endaligned
\ee
with
\be
\label{eq:errgammau} 
\aligned
\Omega_{\Gscr^U}
&=\frac{\Pscr_R}{\Pscr}\Gscr^U-\frac{\Pscr}{R}\left( \aver{U_R} \right)^2 -\frac{1}{R}\Gscr^U 
\\
& \quad-\frac{K^2}{2R^4}\Pscr\aver{U_R}\aver{Ue^{2\eta}}+\frac{\Pscr_R}{R}\aver{U} \aver{U_R}.
\endaligned
\ee 
The term $\Omega_{\Gscr^U}$ will be shown to be an ``error term'', while the remaining terms in the right-hand side of \eqref{eq:dgammau} will contribute to the derivation of a sharp energy decay estimate. In \eqref{eq:dgammau} and \eqref{eq:errgammau}, we have added and subtracted the term $\frac{\Pscr_R}{\Pscr}\Gscr^U$, as this will simplify some of our estimates.

%----------------------------------------------------------------------------------------------------------

\subsection{Evolution of the corrected energy}

Adding together the contributions of the energy and the correction $\Gscr^U$, we find 
\be
 \label{eq:cef}
\aligned
{d \over dR} \Big( \Escr + \Gscr^U \Big) 
&=-\frac{K^2}{2R^3} \int_{0}^{2\pi} E e^{2\eta}d\theta - \frac{2}{R} \int_{S^1} \left(a^{-1} U_R^2 \right)d\theta 
 \\
&\hbox{}+\frac{1}{R} \int_{0}^{2\pi} a^{-1} U_R^2 d\theta - \frac{1}{R} \int^{\Pscr}_0 \tilde{U}_{\vartheta}^2 d\vartheta   -\frac{\Pscr_R}{\Pscr}\Gscr^U- \frac{1}{R}\Gscr^U  
+\Omega_{\Gscr^U} 
\\
&=-\frac{\Pscr_R}{\Pscr}\left( \Escr+\Gscr^U\right) 
-\frac{1}{R}\left(\Escr+\Gscr^U\right) +\Omega_\Escr+\Omega_{\Gscr^U}, 
\endaligned
\ee
where the error terms are $\Omega_{\Gscr^U}$ defined by \eqref{eq:errgammau} and
$$
\Omega_\Escr=\frac{\Pscr_R}{\Pscr}\Escr-\frac{K^2}{2R^3} \int_{0}^{2\pi} E e^{2\eta}d\theta.
$$

%-----------------------------------------------------------------------------------------
\subsection{Estimate for the energy correction}

We will need the following $1$-dimensional Poincar\'e (or Wirtinger) inequality: for any $a > 0$, if $f$ is an $a$-periodic function in $H^1(0,a)$ and has $0$ mean value on this interval, then 
\be
\int_{[0,\,a]} f^2 \le \frac{a^2}{4 \pi^2} \int_{[0,\,a]} f'^2. 
\ee
This is easily checked by, for instance, using a Fourier decomposition of $f$. 
Using the above notation, we have the following lemma.

\begin{lemma}[Estimate of the $\Gscr^U$ correction of the energy] 
\label{lem:egu}
One has 
$$
|\Gscr^U(R)| \le \frac{ \Pscr(R)}{4\pi R} \Escr(R).
$$
\end{lemma}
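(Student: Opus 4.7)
The plan is to bound $\Gscr^U$ by a straightforward Cauchy--Schwarz estimate in $\theta$, combined with the stated Poincar\'e (Wirtinger) inequality to convert the resulting $L^2$ norm of $U - \aver{U}$ into an $L^2$ norm of a spatial derivative, and finally an AM-GM step to match the energy integrand.

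First, starting from the defining formula $\Gscr^U = R^{-1} \int_{S^1} (U - \aver{U}) \, U_R \, a^{-1} \, d\theta$, I would apply Cauchy--Schwarz with the weight $a^{-1} d\theta$ split as $a^{-1/2} \cdot a^{-1/2}$ between the two factors, to obtain
\[
|\Gscr^U| \;\leq\; \frac{1}{R}\left( \int_{S^1} (U - \aver{U})^2 \, a^{-1} \, d\theta \right)^{1/2}\left( \int_{S^1} U_R^2 \, a^{-1} \, d\theta \right)^{1/2}.
\]
The second factor is already the ``$U_R$ part'' of $\Escr$.

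Second, to control the first factor I would switch to the length variable $\vartheta$. The function $\tildeU - \aver{\tildeU}$ is $\Pscr$-periodic and has vanishing mean over $[0,\Pscr]$ by construction of the weighted average \eqref{def:avera}. The Wirtinger inequality stated just before the lemma then gives
\[
\int_0^{\Pscr} (\tildeU - \aver{\tildeU})^2 \, d\vartheta \;\leq\; \frac{\Pscr^2}{4\pi^2}\int_0^{\Pscr} (\tildeU_\vartheta)^2 \, d\vartheta.
\]
Changing variables back to $\theta$, using $d\vartheta = a^{-1} d\theta$ and $\tildeU_\vartheta = a \, U_\theta$, the left-hand side coincides with $\int_{S^1}(U - \aver{U})^2 a^{-1} d\theta$ and the right-hand side becomes $\frac{\Pscr^2}{4\pi^2}\int_{S^1} a \, U_\theta^2 \, d\theta$, which is the ``$U_\theta$ part'' of $\Escr$ times $\Pscr^2/(4\pi^2)$.

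Putting the two estimates together yields
\[
|\Gscr^U| \;\leq\; \frac{\Pscr}{2\pi R}\left(\int_{S^1} a \, U_\theta^2 \, d\theta \right)^{1/2}\left(\int_{S^1} a^{-1} U_R^2 \, d\theta\right)^{1/2},
\]
and applying the elementary inequality $xy \leq \tfrac{1}{2}(x^2 + y^2)$ to the two square roots produces exactly $\frac{1}{2}\Escr$, yielding the claimed bound $|\Gscr^U| \leq \frac{\Pscr}{4\pi R}\Escr$. No step is really an obstacle here; the only mild care needed is making sure that the change of variables $\theta \leftrightarrow \vartheta$ correctly transports the zero-mean condition and the derivative (the factor $a$ in $\tildeU_\vartheta = a \, U_\theta$ is what makes the weights on the left and right of Wirtinger match the $a U_\theta^2$ weight in $\Escr$).
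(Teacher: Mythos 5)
Your proof is correct and uses the same two ingredients as the paper's argument — the weighted Wirtinger inequality in the $\vartheta$ variable (with the identities $d\vartheta = a^{-1}d\theta$, $\tildeU_\vartheta = a\,U_\theta$) and the elementary inequality $xy \le \tfrac12(x^2+y^2)$ with the weight $\Pscr/(2\pi)$ — merely reordered (Cauchy--Schwarz first, then Wirtinger and AM--GM, whereas the paper applies the weighted AM--GM pointwise to the integrand and then Wirtinger). The two routes yield the identical bound $|\Gscr^U| \le \frac{\Pscr}{4\pi R}\Escr$.
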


\begin{proof}We apply the inequality $ab \le (a^2+b^2)/2$ 
to the integrand of $R \Gscr^U$, but we insert weights of
$\frac{\Pscr}{2 \pi}$ so as to obtain 
\begin{eqnarray*}
|R \Gscr^U | &\le& \frac{\Pscr}{4 \pi} \int_0^{\Pscr} \widetilde{U_R}^2 d\vartheta 
+\frac{2 \pi}{2\Pscr} \int_0^{\Pscr}  \left( \tildeU-<\tildeU> \right)^2 d\vartheta,\\
 &\le& 
\frac{\Pscr}{4 \pi} \int_0^{2\pi} U_R^2 a^{-1} d\theta
+\frac{\Pscr}{4 \pi} \int_0^{\Pscr} \tildeU_{\vartheta}^2 d\vartheta
=\frac{\Pscr}{4\pi}\Escr.
\end{eqnarray*} 
\end{proof}
 
%---------------------------------------------------------------------------------------------------------

\subsection{Estimates for the error terms} 

In this section, we estimate all the error arising in the corrected energy formula \eqref{eq:cef}.
 
\begin{lemma}[Estimate for the $|\Omega_\Escr|$ error term]
\label{lem:eee}
One has 
$$
|\Omega_\Escr | \le \Escr \frac{K^2}{2R^3} \int_0^{2\pi} 2R E e^{2\eta} = \Escr \frac{K^2}{2R^3} \int_0^{2\pi} \left( \frac{e^{2\eta}}{a} \right)_R. 
$$
\end{lemma}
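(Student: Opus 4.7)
The plan is to express $\Omega_\Escr$ as a single integral of the form $\frac{K^2}{2R^3}\int E(\theta)(\langle e^{2\eta}\rangle - e^{2\eta}(\theta))\,d\theta$ and then bound the oscillation of $e^{2\eta}$ using the momentum constraint \eqref{weakconstraintsr2}.

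First I would rewrite $\Pscr_R/\Pscr$. Differentiating $\Pscr = \int_{S^1}a^{-1}d\theta$ and inserting the evolution equation \eqref{eq:lnalphar}, which gives $(a^{-1})_R = \frac{K^2}{2R^3}e^{2\eta}a^{-1}$, one finds
\[
\frac{\Pscr_R}{\Pscr} = \frac{K^2}{2R^3}\,\langle e^{2\eta}\rangle.
\]
Substituting into the definition of $\Omega_\Escr$ and using that $\Escr = \int_{S^1} E\,d\theta$, we get
\[
\Omega_\Escr = \frac{K^2}{2R^3}\int_{S^1} E(\theta)\Bigl(\langle e^{2\eta}\rangle - e^{2\eta}(\theta)\Bigr)\,d\theta.
\]

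Next, since $\langle e^{2\eta}\rangle$ is a (weighted) average of $e^{2\eta}$ over $S^1$, for each $\theta$,
\[
\bigl|\langle e^{2\eta}\rangle - e^{2\eta}(\theta)\bigr| \le \operatorname{osc}_{S^1}(e^{2\eta}) \le \bigl\|(e^{2\eta})_\theta\bigr\|_{L^1(S^1)},
\]
because any periodic $C^1$ function satisfies $\max - \min \le \int |f'|$. Pulling this pointwise bound out of the integral,
\[
|\Omega_\Escr| \le \frac{K^2}{2R^3}\,\Escr\,\bigl\|(e^{2\eta})_\theta\bigr\|_{L^1(S^1)}.
\]

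The final step is to control $(e^{2\eta})_\theta$ pointwise in terms of the energy density. Using the constraint \eqref{weakconstraintsr2}, $\eta_\theta = RF = 2R\,U_R U_\theta$, so $(e^{2\eta})_\theta = 4R\,U_R U_\theta\,e^{2\eta}$. The elementary bound $2|U_R U_\theta| = 2|a^{-1/2}U_R||a^{1/2}U_\theta| \le a^{-1}U_R^2 + a\,U_\theta^2 = E$ then yields
\[
\bigl|(e^{2\eta})_\theta\bigr| \le 2R\,E\,e^{2\eta}.
\]
Combining with the previous estimate gives the first inequality of the lemma. The second equality is just a rewriting via the already-established identity \eqref{id:ea}, namely $\bigl(e^{2\eta}/a\bigr)_R = 2RE\,e^{2\eta}$.

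The only mildly delicate point is the first step—recognizing that in $\Omega_\Escr$ the two terms combine into a single ``averaging error'' integral. Once this is seen, the remaining work is pointwise and uses only the momentum constraint plus an AM–GM inequality; I expect no essential obstacle.
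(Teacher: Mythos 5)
Your proof is correct and follows essentially the same route as the paper's: rewrite $\Pscr_R/\Pscr$ as the weighted average $\frac{K^2}{2R^3}\la e^{2\eta}\ra$, bound the resulting averaging error by the oscillation of $e^{2\eta}$, i.e.\ by $\|(e^{2\eta})_\theta\|_{L^1(S^1)}$, and then use the constraint $\eta_\theta=2RU_RU_\theta$ together with $2|U_RU_\theta|\le E$ and the identity \eqref{id:ea}. The paper phrases the second step as a double integral in $(\theta,\theta')$ rather than via the average, but this is only a cosmetic difference.
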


\begin{proof} Recall that 
$$
\frac{\Pscr_R}{\Pscr}= \frac{K^2 }{2R^3} \int_0^{2\pi}e^{2\eta} a^{-1}d\theta \left( \int_0^{2\pi} a^{-1} d\theta\right)^{-1}, 
$$
so that  
$$
\aligned
& \left|-\frac{K^2}{2R^3} \int_{0}^{2\pi} E e^{2\eta}d\theta+\frac{\Pscr_R}{\Pscr}\Escr \right| 
\\
&\le \frac{K^2}{2R^3 \Pscr} \int_0^{2\pi} E(R,\theta) d\theta \int_0^{2\pi} a^{-1}(R,\theta') | e^{2\eta}(R,\theta')-e^{2\eta}(R,\theta)| d\theta' 
\\
&\le \Escr \frac{K^2}{2R^3} \int_0^{2\pi} | 2\eta_\theta| e^{2\eta} d\theta 
\\
&\le \Escr \frac{K^2}{2R^3} \int_0^{2\pi} 2R E e^{2\eta}d\theta = \Escr \frac{K^2}{2R^3} \int_0^{2\pi} \left( \frac{e^{2\eta}}{a} \right)_R d\theta,
\endaligned 
$$
where we have used the constraint equation for $\eta_{\theta}$ \eqref{weakconstraintsr2} and the identity \eqref{id:ea}.
\end{proof}

Next, we analyse the error term $\Omega_{\Gscr^U}$. It is convenient to split it into three components as follows: 
$\Omega_{\Gscr^U}=I_1+I_2+I_3$, where $I_1$, $I_2$, $I_3$ are defined as
\begin{eqnarray}
I_1&=&-\frac{1}{R}\Gscr^U, \nonumber \\
I_2&=&\frac{\Pscr_R}{\Pscr}\Gscr^U+\frac{\Pscr_R}{R}\aver{U} \aver{U_R} -\frac{K^2}{2R^4}\Pscr\aver{U_R}\aver{Ue^{2\eta}}, \nonumber \\
I_3&=&-\frac{\Pscr}{R}\left( \aver{U_R} \right)^2.
\nonumber
\end{eqnarray}

\begin{lemma}\label{lem:Iee}The following estimates hold
\begin{eqnarray}
|I_1 | &\le& \frac{ \Pscr(R)}{4\pi R^2} \Escr(R), \nonumber \\
|I_2| &\le& \frac{\Pscr_R}{R}\Escr\,,\nonumber \\
|I_3| &\le& \frac{\mathcal{A}}{R^3  \Pscr(R) }, \nonumber
\end{eqnarray}
where $\mathcal{A}$ is a non-negative constant determined by the initial data
$$\mathcal{A}=R_1^2 \Pscr(R_1)^2 \left(\aver{U_R} \right)^2(R_1).$$
\end{lemma}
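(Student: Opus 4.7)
The plan is to dispatch the three estimates in turn: the bounds on $I_1$ and $I_3$ are immediate consequences of earlier results, while the bound on $I_2$ is the only genuinely technical step. For $I_1$, since $I_1 = -\Gscr^U/R$ by definition, I would simply divide the estimate of Lemma \ref{lem:egu} by $R$, which yields $|I_1| \le \Pscr(R)\Escr(R)/(4\pi R^2)$. For $I_3$, the conservation law of Lemma \ref{eq:mvar} gives $R\Pscr(R)\aver{U_R}(R) = R_1\Pscr(R_1)\aver{U_R}(R_1)$; squaring both sides and dividing by $R^3\Pscr(R)$ immediately yields $\Pscr(R)(\aver{U_R})^2/R = \mathcal{A}/(R^3\Pscr(R))$, which is the claimed bound on $|I_3|$.

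The main work is for $I_2$. The key structural observation is that $I_2$ is invariant under the substitution $U \mapsto U - c$ for any constant $c$: the resulting variation is proportional to $c\aver{U_R}[\Pscr_R/R - K^2\Pscr\aver{e^{2\eta}}/(2R^4)]$, which vanishes thanks to the identity
$$
\Pscr_R = \frac{K^2\Pscr}{2R^3}\aver{e^{2\eta}},
$$
itself an immediate consequence of \eqref{eq:lnalphar}. At any fixed value of $R$, I may therefore assume $\aver{U} = 0$ without loss of generality. Using $\Gscr^U = (\Pscr/R)(\aver{UU_R} - \aver{U}\aver{U_R})$ to combine the first two terms in \eqref{eq:errgammau}, $I_2$ then takes the compact form
$$
I_2 = \frac{\Pscr_R}{R}\aver{UU_R} - \frac{K^2\Pscr}{2R^4}\aver{U_R}\aver{Ue^{2\eta}}.
$$

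I would then estimate the two terms separately. For the first, Cauchy-Schwarz combined with the Poincar\'e-Wirtinger inequality applied to $\tildeU$ on $[0,\Pscr]$ (which has zero mean since $\aver{U}=0$) yields $|\aver{UU_R}| \le \aver{U^2}^{1/2}\aver{U_R^2}^{1/2} \le (\Pscr\Escr/(4\pi^2))^{1/2}(\Escr/\Pscr)^{1/2} = \Escr/(2\pi)$, exactly in the spirit of Lemma \ref{lem:egu}. For the second term, the assumption $\aver{U}=0$ guarantees the existence of $\theta_0 \in S^1$ with $U(\theta_0)=0$, whence
$$
\|U\|_{L^\infty(S^1)} \le \int_{S^1}|U_\theta|\, d\theta \le \Big(\int_{S^1} a\, U_\theta^2\, d\theta\Big)^{1/2}\Big(\int_{S^1} a^{-1}d\theta\Big)^{1/2} \le (\Pscr\Escr)^{1/2}
$$
by Cauchy-Schwarz. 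Combined with $|\aver{U_R}| \le \aver{U_R^2}^{1/2} \le (\Escr/\Pscr)^{1/2}$ and the identity for $\Pscr_R$ above, the second term is controlled by $\Pscr_R\Escr/R$.

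The hard part will be ensuring that the combined constant matches the stated $\Pscr_R\Escr/R$. A direct summation of the two bounds gives a constant slightly larger than $1$; sharpening to $1$ will require either grouping both terms as a single average $\aver{U[\aver{e^{2\eta}}U_R - \aver{U_R}e^{2\eta}]}$ (so that $\aver{U}=0$ combines with $\aver{\aver{e^{2\eta}}U_R - \aver{U_R}e^{2\eta}}=0$ in a single Cauchy-Schwarz step), or a weighted Poincar\'e estimate adapted to the measure $e^{2\eta}a^{-1}d\theta$. This algebraic bookkeeping, rather than any analytic difficulty, is the primary obstacle.
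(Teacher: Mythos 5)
Your treatment of $I_1$ and $I_3$ coincides with the paper's, and your reduction of $I_2$ to the compact form $\frac{\Pscr_R}{R}\aver{U U_R}-\frac{K^2\Pscr}{2R^4}\aver{U_R}\aver{Ue^{2\eta}}$ is precisely the paper's intermediate step. The gap is the one you flag yourself: the term-by-term estimate yields $\bigl(1+\frac{1}{2\pi}\bigr)\frac{\Pscr_R}{R}\Escr$ rather than the stated bound, and neither proposed repair is carried out. Moreover, the first repair as described would fail: a single Cauchy--Schwarz step on $\aver{Ug}$ with $g=\aver{e^{2\eta}}U_R-\aver{U_R}e^{2\eta}$ requires controlling $\aver{g^2}$, which contains the term $\aver{U_R}^2\aver{e^{4\eta}}$; since $\aver{e^{4\eta}}$ is not controlled by $\aver{e^{2\eta}}^2$ (no pointwise control on $e^{2\eta}$ is available at this stage of the argument), this route does not close. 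The weighted-Poincar\'e alternative would carry a constant depending on the weight and is equally unlikely to produce exactly $1$.

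The paper's mechanism avoids the split altogether. It keeps the compact form as a double integral,
\begin{equation*}
I_2=\frac{K^2}{2R^4\Pscr}\int_0^{2\pi}U_R(\theta')\,a^{-1}(\theta')\left(\int_0^{2\pi}e^{2\eta}(\theta)\,a^{-1}(\theta)\bigl[U(\theta')-U(\theta)\bigr]\,d\theta\right)d\theta',
\end{equation*}
and bounds the difference uniformly by the total variation, $|U(\theta')-U(\theta)|\le\int_0^{\Pscr}|\tildeU_\vartheta|\,d\vartheta$. The $e^{2\eta}a^{-1}$ integral then factors out whole and recombines with $\frac{K^2}{2R^3}$ into exactly $\Pscr_R$, while the remaining product satisfies $\int_0^{\Pscr}|\widetilde{U_R}|\,d\vartheta\cdot\int_0^{\Pscr}|\tildeU_\vartheta|\,d\vartheta\le\Pscr\,\Escr$ by Cauchy--Schwarz on each factor. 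No Poincar\'e inequality and no normalization $\aver{U}=0$ are needed, and the constant is exactly $1$. (For the downstream bootstrap the precise constant is immaterial, but the lemma as stated requires it, so you should replace your split by this oscillation argument.)
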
 

\begin{proof} The estimates on $I_1$ and $I_3$ follow immediately from Lemmas \ref{lem:egu} and \ref{eq:mvar}, respectively. We then estimate $I_2$ as follows. Note first that
\begin{eqnarray*}
I_2&=&\frac{\Pscr_R}{\Pscr}\Gscr^U+\frac{\Pscr_R}{R}\aver{U} \aver{U_R} -\frac{K^2}{2R^4}\Pscr\aver{U_R}\aver{Ue^{2\eta}},\\
&=&\hbox{}\frac{K^2}{2R^4 \Pscr}\int_0^{2\pi} U_R(R,\theta') a^{-1}(R,\theta') \left( \int_0^{2\pi}e^{2\eta}(R,\theta)a^{-1}(R,\theta) [U(R,\theta')-U(R,\theta)]d\theta \right) d\theta', 
\end{eqnarray*}
hence
\begin{eqnarray} 
|I_2|&\le& 
\frac{K^2}{2R^4 \Pscr}\int_0^{\Pscr}|\widetilde{U_R}|\,d\vartheta \int_0^{\Pscr} e^{2\tilde{\eta}}\, d\vartheta\int_0^{\Pscr} |\tildeU_\vartheta|\,d\vartheta, \nonumber \\
&\le& \frac{\Pscr_R}{R\Pscr} \left(\Escr^{1/2} \Pscr^{1/2} \right)^2 \nonumber
\le \frac{\Pscr_R}{R}\Escr.\nonumber
\end{eqnarray}
\end{proof}
 
%---------------------------------------------------------------------------------------------------------

\subsection{Combining the estimates for the corrected energy}

Collecting all the estimates for the error terms above, and noting that $I_3$ has a sign, we obtain the  estimate
$$
\aligned
\label{eq:pit}
\frac{d}{dR} \left( \Escr +\Gscr^U \right) + \left( \frac{1}{R}+\frac{\Pscr_R}{\Pscr} \right) \left( \Escr +\Gscr^U \right) \le \frac{\Pscr}{4\pi R^2} \Escr + \frac{\Pscr_R}{R}\Escr+\Escr \frac{K^2}{2R^3} \int_0^{2\pi}\left( \frac{e^{2\eta}}{a} \right)_R, 
\endaligned
$$
from which it follows that
$$
\aligned
R \Pscr (\Escr +\Gscr^U )(R) 
& \le R_0\Pscr(\Escr+\Gscr^U)(R_0)+\int_{R_0}^R \frac{\Pscr^2 \Escr}{4\pi R'} dR' +\int_{R_0}^R \Pscr_R \Pscr \Escr dR' \\
& \quad +\int_{R_0}^{R} \Pscr \Escr \frac{K^2}{2{R'}^2} \int_0^{2\pi} \left( \frac{e^{2\eta}}{a} \right)_R d\theta dR'.
\endaligned
$$
Similarly, we can obtain 

\begin{eqnarray} \label{dineq:glb}
\frac{d}{dR} \left( R \Pscr (\Escr +\Gscr^U ) \right) \ge -\frac{\Acal}{R^2} -\frac{\Pscr^2}{4\pi R} \Escr + \Pscr \Pscr_R \Escr+ \Pscr \Escr \frac{K^2}{2R^2} \int_0^{2\pi}\left( \frac{e^{2\eta}}{a} \right)_R, 
\end{eqnarray} 
leading to 
$$
\aligned
R \Pscr (\Escr +\Gscr^U )(R) 
& \ge R_0\Pscr(\Escr+\Gscr^U)(R_0)-\int_{R_0}^R \frac{\Pscr^2 \Escr}{4\pi R'} dR'
 \\
& -\int_{R_0}^R \Pscr_R \Pscr \Escr dR' -\int_{R_0}^{R} \Pscr \Escr \frac{K^2}{2{R'}^2} \int_0^{2\pi} \left( \frac{e^{2\eta}}{a} \right)_R d\theta dR'-\int_{R_0}^{R}\frac{\mathcal{A}}{R'^2}dR',
\endaligned
$$
where $\mathcal{A}$ is the constant in Lemma \ref{lem:Iee}.  

%---------------------------------------------------------------------------------------

\section{A dynamical system for the renormalized unknowns}
\label{se:dsrq} 

\subsection{The dynamical system}

In the previous section, we have obtained differential inequalities for the quantity $\Pscr (\Escr +\Gscr^U )$, with error terms depending mostly on $\Escr$ and $\Pscr$. In this section, we will try to obtain effective equations in order to control the asymptotic behavior of $\Pscr$. For convenience, we introduce the notation
$$
\aligned
\Fcal&:=\Pscr \Escr\,\\
\Gcal&:=\Pscr (\Escr +\Gscr^U ).
\endaligned
$$
We have thus seen that $\Gcal$ satisfies ``good'' differential inequalities while it is ultimately $\Fcal$ that we want to control, as it is a manifestly coercive quantity (contrary to $\Gcal$). 
We will rely on the guess that the function $\Gcal$ decays like $1/R$ but we will not use yet the differential inequalities derived for $\Gcal$ in the previous section. In fact, $\Gcal$ will appear here only in the form $R \Gcal'/\Gcal$.

\subsubsection*{The system of odes: spatial integration and first error terms}
 
Let $\Qcal=\int_{S^1} \frac{K^2}{2}e^{2\eta}a^{-1}d\theta $. After integration in the spatial variable of the Einstein equations
\eqref{eq:lnalphar}-\eqref{weakconstraintsr1}, we obtain 
\begin{eqnarray}
\Pscr_R&= & \frac{\mathcal{Q}}{R^3}. \label{eq:P}\\
\Qcal_R &=& 2 R \mathcal{F} \Qcal \Pscr^{-2}+\Omega_\Qcal, \label{eq:Q}
\end{eqnarray}
where $\Omega_\Qscr$ is given by
$$
\Omega_{\Qcal}= 2 R \left( \int_{S^1} \frac{K^2}{2} E e^{2\eta}-\Pscr^{-1} \Escr \Qscr\right).
$$
As in Lemma \ref{lem:eee}, $\Omega_{\Qcal}$ satisfies the estimate
\begin{equation}\label{es:oqet}
|\Omega_{\Qcal}| \le R K^2 \Escr \int_{0}^{2\pi} \left( \frac{e^{2\eta}}{a}\right)_R d\theta=  2 R  \Escr \Qscr_R.
\end{equation} 

\subsubsection*{Renormalization}

According to our previous discussion, we expect $\Pscr$ to blow-up in the limit. One can check heuristically that "$\Pscr$ growing like $R^{1/2}$" and "$\Qscr$ growing like $R^{5/2}$" seem the only possibilities (as powers of $R$) compatible with the equations, under the assumption that $\Pscr \Escr$ behaves like $R^{-1}$ (see the discussion at the end of Section \ref{se:hgs}). Thus, one may try to introduce variables $\tilde{c}=\Pscr R^{-1/2}$ and $\tilde{d}=\Qscr R^{-5/2}$ and prove that $\tilde{c}$ and $\tilde{d}$ converge to some finite values. Using \eqref{eq:Q}, the equation for $\tilde{d}$ is then 
$$
\tilde{d}_R=\frac{\tilde{d}}{R} \left( 2 R^2 \Fcal \Pscr^{-2}-5/2 \right)+\Omega_\Qscr.
$$
From this equation, and the coupled equation for $\tilde{c}$, it is not clear whether $\tilde{c}$ and $\tilde{d}$ converge. However, assuming $\Omega_\Qscr$ to be a negligible term, it suggests that $2 R \Fcal \Pscr^{-2} \rightarrow 5/2$ as $R \rightarrow +\infty$. Equivalently, it suggests that $\frac{\Pscr}{R \Fcal^{1/2}} \rightarrow \frac{2}{\sqrt{5}}$. Similarly, one can guess that $\frac{\Qscr}{R^3 \Fcal^{1/2}} \rightarrow \frac{1}{\sqrt{5}}$. We thus introduce a new set of variables $c$ and $d$, replacing $\Pscr$ and $\Qscr$ based on these considerations.

However, since it is actually $\Gcal$ that satisfies ``good'' differential inequalities, we define $c,d$ as
\begin{eqnarray}
c : =\frac{\Pscr}{R\sqrt{\Gcal}} \\
d : =\frac{\Qscr}{R^3\sqrt{\Gcal}}
\end{eqnarray}
where we recall that $\Gcal=\Pscr\left(\Escr + \Gscr^U \right).$ 
Once again, we emphasize that while $\Gcal$ behaves asymptotically as $\Fcal$, 
it is important to use this normalization
rather than that of $\Fcal$, since the normalization procedure will introduce
a derivative of $\Gcal$ in the equation and it is this derivative (rather 
than the one of $\Fcal$) that we can control directly. 

 Note that while $\Fcal$ is manifestly non-negative, this is not the case for $\Gcal$. In the rest of this section, we will assume that $\Gcal > 0$, which ensures that all the computations below (as well as the definitions of $c$ and $d$) make sense. In the next section, a lower bound on $\Gcal$ using a bootstrap argument will be recovered.

An easy computation shows that $(c,d)$ satisfies 
\begin{eqnarray}
c'&=& \frac{d}{R}-\frac{c}{R}-\frac{c}{2} \frac{\Gcal'}{\Gcal}, \label{eq:c}\\
d'&=& \frac{\Fcal}{\Gcal}\frac{2dc^{-2}}{R}-\frac{3}{R}d-\frac{d}{2} \frac{\Gcal'}{\Gcal}
+ \frac{\Omega_{\Qcal}}{R^3 \sqrt{\Gcal}}. \label{eq:d}
\end{eqnarray}
To find the correct limits for $(c,d)$, let us first consider, the ordinary differential system 
\be
\aligned
c'&= \frac{d}{R}-\frac{c}{R}+\frac{c}{2R},
 \\
d'&= \frac{2dc^{-2}}{R}-\frac{3}{R}d+\frac{d}{2 R},  
\endaligned
\ee
which is obtained from the previous one by replacing $\frac{\Fcal}{\Gcal}$ by $1$, dropping the error term $\frac{\Omega_{\Qcal}}{R^3 \sqrt{\Gcal}}$ and replacing $-\frac{\Gcal'}{\Gcal}$ by $1/R$.

Looking now for a static point $(c_\infty, d_\infty)$ of the above system, we find that there is only one solution: $c_\infty=\frac{2}{\sqrt{5}}$, $d_\infty=\frac{1}{\sqrt{5}}$.
Thus, let us introduce $c_1$, $d_1$ by
\be
\aligned
& c_1 =c-\frac{2}{\sqrt{5}},
 \\
& d_1=d-\frac{1}{\sqrt{5}}
\endaligned
\ee
We finally deduce the equations satisfied by $c_1, d_1$ from the equations \eqref{eq:c}-\eqref{eq:d}, that is, 
\begin{eqnarray}
c_1' &=& \frac{d_1+\frac{1}{\sqrt{5}}}{R}-\frac{\frac{2}{\sqrt{5}}+c_1}{R}
-\frac{\frac{2}{\sqrt{5}}+c_1}{2} \frac{\Gcal'}{\Gcal}, \label{eq:c11}\\
d_1' &=& \frac{\Fcal}{\Gcal}\frac{2}{R} \frac{d_1+\frac{1}{\sqrt{5}}}{\left(\frac{2}{\sqrt{5}}
+c_1\right)^2}-\frac{3}{R}\left(d_1+\frac{1}{\sqrt{5}}\right)
-\left(d_1+\frac{1}{\sqrt{5}}\right)\frac{\Gcal'}{2\Gcal}+ \frac{\Omega_{\Qcal}}{R^3 \sqrt{\Gcal}}. \label{eq:d11}
\end{eqnarray}

Looking first at \eqref{eq:c11}, we rewrite it in the form
\begin{eqnarray*}
c_1'
 &=& \frac{1}{R}d_1-\frac{1}{2}\frac{c_1}{R}
-\frac{c_1}{2R}\left( 1 +R\frac{\Gcal'}{\Gcal}\right)  
-\frac{1}{R\sqrt{5}} \left( 1 +R\frac{\Gcal'}{\Gcal}\right).
\end{eqnarray*}
From \eqref{eq:d11}, elementary calculations (keeping in mind the linearization of the system) lead us to 
\begin{eqnarray*}
d_1' &=&  - \frac{5}{2R}c_1 +\frac{d_1}{R} 
\left( -\frac{1}{2}- \frac{\Gcal'}{2\Gcal}R \right) 
- \frac{1}{R} \frac{1}{2\sqrt{5}} \left( 1 + \frac{R\Gcal'}{\Gcal} \right) \\
&&+ \frac{1}{R\left(c_1+\frac{2}{\sqrt{5}} \right)^2 } f(d_1,c_1) \\ 
&&+\frac{2}{R}\left( \frac{\Fcal}{\Gcal}-1\right)\frac{d_1+\frac{1}{\sqrt{5}}}{\left(\frac{2}{\sqrt{5}}
+c_1\right)^2}
+ \frac{\Omega_{\Qcal}}{R^3 \sqrt{\Gcal}},
\end{eqnarray*}
where $f(c_1,d_1)$ is a polynomial in $c_1$ and $d_1$ with vanishing linear part (the first terms are quadratic in $c_1$ and $d_1$).
Thus, we have
\begin{eqnarray}
d_1' &=&-\frac{5}{2R}c_1 + \Omega_{lin}^d+\Omega_1^d+\Omega_2^d +\Omega_3^d+\Omega_4^d, 
\\
c_1' &=&  \frac{d_1}{R}-\frac{c_1}{2R}+\Omega_{lin}^c+\Omega_1^c.
\end{eqnarray}
where the terms $\Omega_i^{c,d}$ contain all the error terms, i.e.
\begin{eqnarray}
\Omega_{lin}^d&=& -\frac{d_1}{2R} \left( 1+ \frac{\Gcal'}{\Gcal}R \right),
 \label{eq:oedlin} \\
\Omega_1^d&=&-\frac{1}{R} \frac{1}{2\sqrt{5}} \left( 1 + \frac{R\Gcal'}{\Gcal} \right),
\\
\Omega_2^d&=&\frac{1}{R\left(c_1+\frac{2}{\sqrt{5}} \right)^2 }  f(d_1,c_1),
\end{eqnarray}
\begin{eqnarray}
\Omega_3^d&=&\frac{2}{R}\left( \frac{\Fcal}{\Gcal}-1\right)\frac{d_1+\frac{1}{\sqrt{5}}}{\left(\frac{2}{\sqrt{5}}
+c_1\right)^2},
\\
\Omega_4^d&=& \frac{\Omega_{\Qcal}}{R^3 \sqrt{\Gcal}},
 \label{eq:od4}
 \end{eqnarray}
\begin{eqnarray}
\Omega_{lin}^c&=&- \frac{c_1}{2R}( 1+ \frac{R\Gcal'}{\Gcal}),
 \\
\Omega_1^c&=&  -\frac{1}{R\sqrt{5}} ( 1 +R\frac{\Gcal'}{\Gcal}).
 \label{eq:oe1c}
\end{eqnarray}

Setting now $u:=\left(\begin{array}{c}c_1\\d_1\end{array} \right)$, we rewrite the system under consideration as 
$$
u' = 1/R\left( 
\left( \begin{array}{cc} 
-1/2 & 1 \\
-5/2 & 0 
\end{array}
\right)-\frac{1}{2}\left( 1+ \frac{\Gcal'R}{\Gcal}\right)I_2\right)u
+ \omega, 
$$
where $\omega$ contains all the terms $\Omega_i^{c,d}$ apart from $\Omega^d_{lin}$ and $\Omega^c_{lin}$, and $I_2$ denotes the identity matrix. Consider the matrix 
$$
A=\left( \begin{array}{cc} 
-1/2 & 1 \\
-5/2 & 0 
\end{array}
\right)
$$
and also let  
$$
B=-\frac{1}{2}\left(1+ \frac{\Gcal'R}{\Gcal}\right)I_2.
$$
Then, we find 
\be
u= \exp\int_{R_0}^{R} \frac{A+B}{R'}dR'\, u(R_0)
+\int^R_{R_0}\left[\exp\int_{R'}^R  \frac{A+B}{R''}dR'' \right]\omega(R') dR'.
\ee

Note next that 
$$
\aligned
\exp\int_{R_0}^{R} \frac{A+B}{R'}dR'
& =\exp\int^R_{R_0} \frac{A}{R'}dR'\exp\int^R_{R_0} \frac{B}{R'}dR'
\\
& =\exp\int^R_{R_0} \frac{A}{R'}dR'\left( \frac{R_0 \Gcal(R_0) }{R\Gcal(R)}\right)^{1/2}
\endaligned
$$
and that the eigenvalues of $A$ are 
$\lambda_{\pm}= \frac{-1}{4} \pm \frac{i\sqrt{39}}{4}$. 
Hence  $$||\exp\int^R_{R_0} \frac{A}{R'}dR'|| 
\le C_A \left(\frac{R_0}{R}\right)^{1/4},$$ for some constant $C_A > 0$ depending on the matrix $A$ and we have the following result. 

\begin{proposition} Provided the corrected energy $\Gcal$ is positive for all $R \in [R_0,R_1]$, one has, for all $R \in [R_0,R_1],$
\begin{eqnarray} \label{es:u}
|u(R)|& \le&  C_A\left(\frac{R_0}{R}\right)^{1/4}\left( \frac{R_0 \Gcal(R_0) }{R\Gcal(R)}\right)^{1/2}|u(R_0)| \\
&&\hbox{}+\int^R_{R_0}  C_A \left(\frac{R'}{R}\right)^{1/4}\left( \frac{R'\Gcal(R') }{R\Gcal(R)}\right)^{1/2}|\omega(R')|dR', \nonumber
\end{eqnarray}
where 
$$
|\omega| \le C\left(|\Omega^d_1|+|\Omega^d_2|+|\Omega^d_3|+|\Omega^d_4|+|\Omega^c_1|\right).
$$
\end{proposition}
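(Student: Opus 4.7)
The plan is to work directly from the variation of constants (Duhamel) representation for $u$ already displayed above the proposition, and simply to bound each of the matrix factors that appear in it. Since $B = -\tfrac{1}{2}\bigl(1 + R\Gcal'/\Gcal\bigr) I_2$ is a scalar multiple of the identity, it commutes with $A$, so the matrix exponential factors multiplicatively:
\begin{equation*}
\exp \int_{R'}^{R} \frac{A+B}{R''}\,dR'' = \exp \int_{R'}^{R} \frac{A}{R''}\,dR'' \cdot \exp \int_{R'}^{R} \frac{B}{R''}\,dR''.
\end{equation*}

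Next, I would compute the $B$-factor explicitly. Under the standing assumption that $\Gcal > 0$ on $[R_0,R_1]$, integration of $B/R'' = -\tfrac{1}{2}\bigl(1/R'' + \Gcal'(R'')/\Gcal(R'')\bigr) I_2$ between $R'$ and $R$ gives
\begin{equation*}
\exp \int_{R'}^{R} \frac{B}{R''}\,dR'' = \left( \frac{R' \Gcal(R')}{R \, \Gcal(R)} \right)^{1/2} I_2,
\end{equation*}
which is precisely the scalar prefactor appearing in the proposition. For the $A$-factor I would use $\int_{R'}^{R} A/R''\,dR'' = A \ln(R/R')$ and invoke the eigenvalue computation $\lambda_\pm = -1/4 \pm i\sqrt{39}/4$: both eigenvalues have real part equal to $-1/4$ and are distinct, so $A$ is diagonalisable over $\mathbb{C}$, and a routine argument extracts the bound $\| \exp(A \ln(R/R')) \| \le C_A (R'/R)^{1/4}$ already stated in the preceding display of the text.

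It then suffices to take norms in the Duhamel representation, apply the triangle inequality to the integral, and substitute the two bounds above on each matrix factor; this immediately yields the stated estimate for $|u(R)|$. The accompanying pointwise bound on $|\omega|$ is simply the Euclidean triangle inequality applied to the definition of $\omega$ as the sum of the error terms $\Omega_1^d, \Omega_2^d, \Omega_3^d, \Omega_4^d$ and $\Omega_1^c$, the linear corrections $\Omega_{lin}^d$ and $\Omega_{lin}^c$ having been absorbed into $B$. No step presents any real obstacle; the only mildly subtle point is justifying the exponent $1/4$ in the $A$-factor bound, which rests entirely on the eigenvalue computation and the diagonalisability of $A$, both prepared in the preceding paragraph.
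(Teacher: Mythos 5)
Your argument is correct and is essentially identical to the paper's: the paper likewise starts from the Duhamel representation, factors the exponential using that $B$ is a scalar multiple of $I_2$, evaluates $\exp\int B/R''\,dR''$ as the $\left(R'\Gcal(R')/(R\Gcal(R))\right)^{1/2}$ prefactor, and extracts the $(R'/R)^{1/4}$ decay from the eigenvalues $\lambda_\pm=-\tfrac14\pm\tfrac{i\sqrt{39}}{4}$ of $A$ before taking norms. Your added remark on diagonalisability of $A$ over $\mathbb{C}$ is a fine justification of the bound $\|\exp(A\ln(R/R'))\|\le C_A(R'/R)^{1/4}$ that the paper states without comment.
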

It remains to combine the above inequality with our differential inequalities for $\Gcal$ and estimates on the error terms. 

%----------------------------------------------------------------------------------------

\subsection{Source-terms of the dynamical system} 

We now combine our results in the latter two sections and we estimate the source-terms of the dynamical system. We will assume here that $\Gcal$ is strictly positive, a property that we shall retrieve below in a bootstrap argument. 

\subsubsection*{Estimate for $\left|\frac{\Omega_{\Qcal}}{R^3 \sqrt{\Gcal}}\right|$}

Since we have
\begin{eqnarray*}
\Qcal &=& dR^3 \sqrt{\Gcal},
 \\
\Qcal_R &=& d_R R^3 \sqrt{\Gcal} + 3 d  R^2 \sqrt{\Gcal}+ d \frac{R^3 }{2} \frac{\Gcal'}{\sqrt{\Gcal}},
\end{eqnarray*}
it follows that
$$
\left|\frac{\Omega_{\Qcal}}{R^3 \sqrt{\Gcal}}\right| 
\le 2 \Escr R d_R + 6 R \Escr \frac{d}{R}+  R \Escr d  \frac{\Gcal'}{\Gcal}.
$$
Observe that, while some terms in the right-hand side have no sign, their sum does (because $\mathcal{Q}_R$ is positive). 

\subsubsection*{Estimating $R\Gcal'/\Gcal+1$}

From the corrected energy estimate, we get 
\begin{eqnarray*}
\left| \frac{\Gcal'}{\Pscr}+ \frac{\Gcal}{\Pscr R} \right| \le \frac{\Pscr}{4 \pi R^2} \Escr+
\frac{\mathcal{A}}{R^3  \Pscr(R) } + \Escr \frac{K^2}{2R^3}
\left( \int_{S^1} \frac{e^{2\eta}}{a}\right)_R+ \frac{\Pscr_R}{R}\Escr, 
\end{eqnarray*}
 hence
\begin{eqnarray}
\left| \frac{R\Gcal'}{\Gcal}+1 \right| &\le& \frac{ \Pscr }{4\pi R}\frac{\Fcal}{\Gcal} +\frac{\mathcal{A}}{\Gcal R^{2}}
+ \frac{\mathcal{F}}{\mathcal{G}} \frac{K^2}{2R^2}
\left( \int_{S^1} \frac{e^{2\eta}}{a}\right)_R
+\frac{\mathcal{F}}{\Gcal} \Pscr_R \nonumber 
\\
 &\le&\frac{\mathcal{A}}{ \Gcal R^2}
+ \frac{\mathcal{F}}{\mathcal{G}} \frac{\Qcal_R}{R^2}
+\frac{\mathcal{F}}{\Gcal} \frac{\sqrt{\Gcal}}{4\pi} c
+\frac{\mathcal{F}}{\Gcal} \sqrt{\Gcal} d. \label{es:Rggp}
\end{eqnarray} 

\subsubsection*{Estimates for $\Omega_1^i$}

It follows from the estimate \eqref{es:Rggp} and the definition of $\Omega_1^c$ and $\Omega_1^d$ that there exists a  constant
$C> 0$ such that, for $i=d,c$:
\begin{eqnarray}\label{es:oi1}
| \Omega_1^i | \le \frac{C}{R} \left( \frac{\mathcal{A}}{\Gcal R^2}
+ \frac{\mathcal{F}}{\mathcal{G}} \frac{\Qcal_R}{R^2}
+\frac{\mathcal{F}}{\Gcal} \frac{\sqrt{\Gcal}}{4\pi} c
+\frac{\mathcal{F}}{\Gcal} \sqrt{\Gcal} d \right).
\end{eqnarray}

\subsubsection*{Estimates for $\Fcal \Gcal^{-1}$ and $\Omega_3^d$}

Using Lemma \eqref{lem:egu}, we have
\begin{eqnarray}\label{es:fg}
\left| \frac{\Fcal}{\Gcal}-1\right|= \left| \frac{\Fcal-\Gcal}{\Gcal} \right|
= \left| \frac{\Pscr \Gscr^U }{\Gcal}\right| 
\le \frac{1}{4\pi R}\frac{\Pscr^2 \Escr}{\Gcal} \le \frac{\Pscr}{4\pi R} \frac{\Fcal}{\Gcal}.
\end{eqnarray}
As a consequence, provided that $c_1$ is sufficiently small so that $\frac{2}{\sqrt{5}}+c_1$ is bounded from below by, say $\frac{1}{\sqrt{5}}$, we find 
\begin{eqnarray} \label{es:od3}
|\Omega^d_3| \le C \left( |d_1| +1 \right)\frac{\Pscr}{4\pi R^2} \frac{\Fcal}{\Gcal} 
\end{eqnarray}
for some constant $C > 0$.

Note that at this point, we have estimates on all the error terms arising in \eqref{eq:oedlin}-\eqref{eq:oe1c}, apart from $\Omega_2^d$ which will be estimated directly in the next section (using a smallness assumption on $c, d$).

\subsubsection*{ Estimates on $\Gcal$}

After integration of the corrected energy estimate, we find 
\begin{eqnarray}
| R\Gcal -R_0 \Gcal(R_0) | &\le& \mathcal{A} \left( \frac{1}{R_0}-\frac{1}{R}\right)
+\int^R_{R_0} \left( \frac{\Pscr}{4 \pi R' } \Fcal+ \frac{\Fcal \Qcal_R}{R'^2}+ \frac{\Fcal \Qcal}{R'^3} \right) dR'.
\end{eqnarray}
The last term can be rewritten in terms of $\Pscr_R$, giving  
\begin{eqnarray}
| R\Gcal -R_0 \Gcal(R_0) | \le  \label{es:gc} \mathcal{A} \left( \frac{1}{R_0}-\frac{1}{R}\right)
+\int^R_{R_0} \left( \frac{\Pscr}{4 \pi R' } \Fcal+ \frac{\Fcal \Qcal_R}{R'^2}+ \Fcal \Pscr_R \right)dR'.
\end{eqnarray}

%==============================================================

\section{Small data theory} 
\label{se:sdp}

\subsection{Assumption on the initial data}

We now restrict ourselves to small data in the following sense. 
Fix $C_1 > 0$, $\mathcal{A} \in [0,+\infty)$, and $R_0> 0$, as well as some $\epsilon > 0$. 
Consider the class of initial data satisfying  
\begin{eqnarray}
R_0 \Gcal(R_0)-\frac{\mathcal{A}}{R_0} &\ge& C_1 > 0,
 \label{ineq:gah}\\
|c_1|(R_0) &\le& \epsilon, \label{ida:c} \\
|d_1|(R_0) &\le& \epsilon,\label{ida:d}\\
|\frac{\Fcal}{\Gcal}-1|(R_0)&\le& 1,\label{ida:fog} \\
\Gcal(R_0)+\frac{\mathcal{A}}{R_0^2} &\le& \epsilon, \label{ida:g} 
\qquad 
\label{sm:g}
\end{eqnarray}
where $\mathcal{A} = R_0^2\left( \int_{S^1} a^{-1} U_R \right)^2 (R_0)$.

Note that the first assumption implies in particular that $\Gcal > 0$.
The second and third assumptions imply that $\Pscr$ and $\Pscr_R$ are close to their expected asymptotic behavior (which depends on $\Escr$, hence the need for normalized quantities).
The fourth condition implies that the correction term $\Gscr^U$ is "not too large" compared to the energy $\Escr$. 
The last inequality  means that the (rescaled) energy is small.  

Let $R_b$ be the largest time $R$ such that the following bootstrap assumptions are valid in $\mathcal{B}:=[R_0,R_b)$. 
For all $R \in \mathcal{B}$, we have 
\begin{eqnarray}
|c_1|(R) < \epsilon^{1/4}, \label{bs:c1}
\\
|d_1|(R) < \epsilon^{1/4}, \label{bs:d1}
\\
|\frac{\Fcal}{\Gcal}-1|(R)< 2, \label{bs:gf}
\\
0 < \Gcal(R_0) < \left( R_0 \Gcal(R_0)+\frac{\mathcal{A}}{R_0} \right) \frac{2}{R}. \label{bs:g}
\end{eqnarray}
The set $\mathcal{B}$ is clearly open in $[R_0,+\infty)$.
Moreover, from the smallness assumptions it follows that $\mathcal{B}$ 
is also non-empty. 

As an immediate consequence of \eqref{bs:c1} and \eqref{bs:d1}, if $\epsilon$ is sufficiently small, then we have in $\mathcal{B}$
\begin{equation}
\frac{1}{c^2}(R_0)=\frac{1}{\left(c_1+\frac{2}{ \sqrt{5}}\right)^2}(R_0) 
\le 2,
\label{bs:c}
\end{equation} 
\begin{eqnarray}
|c| = |\frac{2}{\sqrt{5}}+c_1| &\le&  1, \label{bs:ca} \\
|d| = |\frac{1}{\sqrt{5}}+d_1|&\le&  1.
\end{eqnarray}
Furthermore, from \eqref{bs:g} and \eqref{ida:g}, we have immediately in $\mathcal{B}$,
\begin{equation} \label{ineq:gcalim}
\Gcal \le 2 \frac{R_0}{R} \left(\Gcal(R_0)+ \frac{\mathcal{A}}{R_0^2}\right) \le 2 \epsilon\frac{R_0}{R}\le 2 \epsilon.
\end{equation}

We now consider $C_1$ and $\mathcal{A}$ as fixed in \eqref{ineq:gah}. We will show that there exists an $\epsilon_0 > 0$ and a constant $r > 0$ such that for all $0<\epsilon< \epsilon_0$ and $R_0 > r$, the set 
$\mathcal{B}$ is closed; this will be done by "improving" each of the bootstrap assumptions \eqref{bs:c1}-\eqref{bs:g}. Moreover, $\epsilon_0$ will depend only on a lower bound for $r$ (as well as $\mathcal{A}$ and $C_1$).

%----------------------------------------------------------------------------------------

\subsection{Improving the assumption on $\Fcal \Gcal^{-1}$}

In view of the estimate \eqref{es:fg}, we have 
\begin{eqnarray}
\left|\frac{\Fcal}{\Gcal}-1\right| \le 3 \frac{c \Gcal^{1/2}}{4\pi} \le \frac{3 \sqrt{2}}{4\pi}\epsilon^{1/2},  \label{ineq:impfg}
\end{eqnarray}
by using the bootstrap assumptions \eqref{bs:gf}, \eqref{bs:ca} and using \eqref{ineq:gcalim}. 
This improves \eqref{bs:gf}. 

Throughout, the letter $C$ will be used to denote numerical constants that are independent of $\epsilon$ and $R_0$ and may change at each occurrence. Thus, the above estimate reads 
$$
\left|\frac{\Fcal}{\Gcal}-1\right| \le C \epsilon^{1/2}.
$$

\subsubsection*{Improving the $\Gcal$ assumption}

From the corrected energy estimate \eqref{es:gc}, we have
\begin{eqnarray}
R\Gcal \le R_0 \Gcal(R_0) + \frac{\mathcal{A}}{R_0} \nonumber
+ \int^R_{R_0} R'\Gcal  \frac{\Fcal}{\Gcal} \left(\frac{\Pscr}{4 \pi R'^2}+ \frac{\Qcal_R }{R'^3}
+\frac{\Pscr_R}{R'}\right) dR', 
\end{eqnarray}
hence 
\begin{eqnarray} \label{es:gcalf}
\Gcal \le \frac{D_0}{R} \exp \int^R_{R_0} (1+C\epsilon^{1/2})\nonumber
\left[\frac{\Pscr}{4 \pi R'^2}+ \left(\Qcal_R R'^{-3}+\frac{\Pscr_R}{R'}\right) \right], 
\end{eqnarray}
where $D_0=  R_0 \Gcal(R_0) + \frac{\mathcal{A}}{R_0}$ 
and we have used the improved inequality \eqref{ineq:impfg}.

The integral $\int_{R_0}^R \frac{\Pscr}{4 \pi R'^2} dR'$ can be estimated using \eqref{ineq:gcalim}: 
\begin{eqnarray*}
\int_{R_0}^R \frac{\Pscr}{4 \pi R'^2} dR'&=&\int_{R_0}^R \frac{c R' \Gcal^{1/2}}{4 \pi R'^2} dR',\\
&\le&\int_{R_0}^R \frac{C \epsilon^{1/2}R_0^{1/2} }{4 \pi R'^{3/2}} dR' \le C \epsilon^{1/2}
\end{eqnarray*}
for some fixed numerical constant $C > 0$.

For the other integrals, we integrate by parts: 
\begin{eqnarray*}
\int^R_{R_0} \left[\left(\frac{\Qscr_R}{R'^3}+\frac{\Pscr_R}{R'}\right) \right]dR'&\le& \frac{\Qcal}{R^3}
+\frac{\Pscr}{R}+\int_{R_0}^R \left[\frac{3 \Qcal}{R'^4}+\frac{\Pscr}{R'^2}\right]dR'
 \\
&\le& (c+d) \Gcal^{1/2}+\int_{R_0}^R \frac{3d+c}{R'} \Gcal^{1/2}(R')dR'
 \\
&\le& C \epsilon^{1/2}+C \int_{R_0}^R\frac{R_0^{1/2}}{R'^{3/2}}\epsilon^{1/2}dR'
\le C \epsilon^{1/2}.
\end{eqnarray*} 
Combining this result with the previous estimate, we have thus obtained
\begin{eqnarray} \label{es:glb}
R\Gcal \le D_0 \exp\left( \left(1+ C\epsilon^{1/2} \right)C \epsilon^{1/2} \right) < 3/2 D_0,
\end{eqnarray}
providing that $\epsilon$ is small enough. This improves \eqref{bs:g}.

%---------------------------------------------------------------------------------------------------------

\subsubsection*{A lower bound on $\Gcal$}

We derive here a lower bound on $R\Gcal$. From the corrected energy inequality in differential form \eqref{dineq:glb} and the estimates on the error term we have
\begin{eqnarray} \label{dineq:glbf}
\frac{d}{dR} \left( R \Gcal \right) \ge -\frac{\Acal}{R^2}-R\Gcal \left[  \frac{\Fcal}{\Gcal} \left(\frac{\Pscr}{4 \pi R'^2}+ \frac{\Qcal_R}{R'^3}
+\frac{\Pscr_R}{R'}\right) \right].
\end{eqnarray}
Let $$\Omega'= \frac{\Fcal}{\Gcal} \left(\frac{\Pscr}{4 \pi R'^2}+ \frac{\Qcal_R}{R'^3}
+\frac{\Pscr_R}{R'}\right).$$
The estimates of the previous sections have shown that $$\int_{R_0}^R \Omega' dR' \le C \epsilon^{1/2}.$$
We can rewrite \eqref{dineq:glbf} as
$$
\frac{d}{dR} \left( R \Gcal \right) \ge -\frac{\Acal}{R^2} -R\Gcal \Omega'
$$
leading to
\begin{eqnarray*}
\frac{d}{dR} \left( R \Gcal \exp{\int_{R_0}^R \Omega' dR'} \right) 
&\ge& -\frac{\Acal}{R^2} \exp{\int_{R_0}^R \Omega' dR'}
\\
&=& \frac{d}{dR} \left( \frac{\Acal}{R} \right) \exp{\int_{R_0}^R \Omega' dR'},\\
&=&\frac{d}{dR} \left( \frac{\Acal}{R}  \exp{\int_{R_0}^R \Omega' dR'}\right)- \frac{\Acal}{R} \Omega'\exp{\int_{R_0}^R \Omega' dR'}.
\end{eqnarray*}

Thus,
\begin{eqnarray*}
\frac{d}{dR} \left[ \left( R \Gcal -\frac{\mathcal{A}}{R}\right)\exp{ \int_{R_0}^R \Omega' dR'} \right] \ge -\frac{\Acal}{R} \Omega' \exp{ \int_{R_0}^R \Omega' dR'},
\end{eqnarray*} 
which leads after integration to 
\begin{equation}
\label{es:lbg}
R \Gcal - \frac{\Acal}{R} \ge \left( R_0\Gcal(R_0)-\frac{\mathcal{A}}{R_0} \right)(1-C \epsilon^{1/2}) - \frac{\Acal}{R_0}C \epsilon^{1/2} =C_1(1-C \epsilon^{1/2}) - \frac{\Acal}{R_0}C \epsilon^{1/2} \ge \frac{C_1}{2}.
\end{equation}
provided that $\epsilon$ is sufficiently small depending on $\Acal$, $C_1$ and a lower bound on $R_0$.

Since $\Acal \ge 0$, we have thus obtained $R \Gcal\ge \frac{C_1}{2}$. In particular, we have improved the lower bound bootstrap inequality for $\Gcal$. 
\begin{remark}
Instead of starting from the corrected energy inequality in differential form, one could use here the estimate \eqref{es:gc} as well as the estimates of the previous section to estimate the term containing $\Gcal$ in the error term. This would lead to an estimate of the form
$$
R \Gcal \ge C_1 - D_0 C\epsilon^{1/2}
$$
and would therefore require $\epsilon$ to be small compared to $D_0$. The above method has the advantage of not constraining $\epsilon$ any further.
\end{remark} 

\subsubsection*{Improving the $c_1$, $d_1$ assumptions}

Using the lower bound on $\Gcal$ just obtained, the bootstrap assumption \eqref{bs:g}, the initial data assumptions \eqref{ida:c} and \eqref{ida:d} and the fact that $\frac{R'}{R} \le 1$ if $R' \in [R_0,R]$, it follows from \eqref{es:u} that
\be
\aligned
\label{ineq:ub}
|u| 
& \le \left(\frac{ C_A D_0}{C_1}\right)^{1/2} \epsilon +  C \left(\frac{4 D_0}{C_1}\right)^{1/2}\int^R_{R_0}  \left( |\Omega^c_1|+|\Omega^d_1|+|\Omega^d_4|\right) dR'
\\
& +C \left(\frac{4 D_0}{C_1}\right)^{1/2}\int^R_{R_0} \left(\frac{R'}{R}\right)^{1/4} \left(|\Omega_2^d|+|\Omega_3^d| \right) dR'.
\endaligned
\ee

We now estimate all the error terms in $\omega$. 
First, we have
\begin{eqnarray}
|\Omega^c_1, \Omega^d_1| &\le& \frac{C}{R} \left| 1+ R\frac{\Gcal'}{\Gcal} \right| \nonumber \\
&\le& \frac{C}{R} \left( \frac{ 2}{C_1} \frac{\mathcal{A}}{R} + C \frac{\Qscr_R}{R^2} + C \Gcal^{1/2}\right), \label{ineq:o1e}
\end{eqnarray}
using \eqref{es:Rggp}, \eqref{es:oi1} and \eqref{bs:gf}.
The first term in the parentheses in the right-hand side of the last inequality will contribute to \eqref{ineq:ub} as
\begin{eqnarray*}
\left(\frac{4 D_0}{C_1}\right)^{1/2}\int^R_{R_0}\frac{ 2}{C_1} \frac{\mathcal{A}}{R'^2}dR' &\le& C \frac{\mathcal{A}}{C_1^{3/2}} D_0^{1/2} R_0^{-1} \nonumber \\
&\le& C\frac{\mathcal{A}}{C_1^{3/2} R_0^{1/2}}(D_0 R_0^{-1})^{1/2} \nonumber \\
&\le& C(C_1,R_0, \mathcal{A})\epsilon^{1/2},
\nonumber  
\end{eqnarray*}
by using the smallness assumption \eqref{ida:g}. The second term can be estimated using an integration by parts leading to the estimate
$$
C \left( \frac{D_0}{C_1} \right)^{1/2} \int_{R_0}^R \frac{\Qcal_R}{R'^3}dR' \le  C\left( \frac{D_0}{C_1} \right)^{1/2}\epsilon^{1/2}.
$$
Since  $\frac{D_0}{C_1}=1-2\frac{\Acal}{C_1 R_0}$, we thus obtain 
$$
C \left( \frac{D_0}{C_1} \right)^{1/2} \int_{R_0}^R \frac{\Qcal_R}{R'^3}dR' \le C \epsilon^{1/2},
$$
by choosing $\epsilon$ sufficiently small depending only on a lower bound on $C_1$, $\mathcal{A}$ and a lower bound on $R_0$.

The last term in \eqref{ineq:o1e} can be estimated using \eqref{ineq:gcalim} leading to 
$$
\int_{R_0}^R \frac{\Gcal^{1/2}}{R'}dR' \le C \epsilon^{1/2}.
$$

The estimates for $\Omega_2^d, \Omega_3^d$ are straightforward using the bootstrap assumptions
\begin{eqnarray} 
|\Omega^d_2| &\le& \frac{C}{R}\epsilon^{1/2},\nonumber \\
|\Omega^d_3| &\le& \frac{C}{R}  \epsilon^{1/2}.
\nonumber
\end{eqnarray}
For $\Omega_4^d$, we note that in view of \eqref{eq:od4} and \eqref{es:oqet}, we have
$$
|\Omega^d_4| \le \frac{2R \Escr \Qcal_R}{R^3 \sqrt{\Gcal}}.
$$
Then, we note that 
$$
\Escr = \frac{\Fcal}{\Pscr}= \frac{\Fcal}{c R \Gcal^{1/2} },
\nonumber
$$
hence
$$
\Escr \Gcal^{-1/2}=  \frac{1}{c R}  \left( \frac{\Fcal} {\Gcal} \right).
$$
Using the bootstrap assumptions, this leads to 
\begin{equation}\label{es:od4}
|\Omega^d_4 |\le   \frac{1}{c}\frac{\Fcal }{\Gcal}  \frac{2 \Qcal_R}{R^3} \le C \Qcal_R R^{-3},
\end{equation}
where we have used that $\Qcal_R \ge 0$ in the last estimate. Its integral can then be estimated by integration by parts, as we have already done previously.

Combining all these estimates leads us to 
\begin{eqnarray*}
|u| &\le& C \left(\frac{2 D_0}{C_1}\right)^{1/2} \epsilon + C(C_1,R_0, \Acal) \epsilon^{1/2} \\
&\le& C \left(\mathcal{A}, R_0, C_1 \right) \epsilon^{1/2},
\end{eqnarray*}
which improves \eqref{bs:c1} and \eqref{bs:d1}.
In conclusion, we have improved all of the bootstrap inequalities 
and it follows that 
\be \nonumber
\mathcal{B}=[R_0,+\infty).
\ee

%=======================================================================

\section{The asymptotic regime} \label{se:ar}

In this section, we state and prove our main result. 

\begin{theorem}[Late-time asymptotics of $T^2$-symmetric polarized vacuum spacetimes]
 \label{th:mt}Let $\mathcal{A} \ge 0$ and $C_1 > 0$ and $r > 0$ be fixed constants. Then, there exists an $\epsilon_0$ such that if $0 \le \epsilon \le \epsilon_0$ and $R_0 \ge r$, for any initial data set satisfying the smallness conditions \eqref{ineq:gah}-\eqref{sm:g},    
the associated solution has the following asymptotic behavior: for all times $R \ge R_0$ and all $\theta \in S^1$,
\begin{eqnarray}
|u|(R,\theta) &=& O(R^{-1/4}), \label{ab:u} \\
\left|R\Gcal(R) - C_\infty \right| &=& O(R^{-1/2}),  \label{ab:g}\\
\left| \Pscr(R) -\frac{2}{\sqrt{5}} C_\infty^{1/2} R^{1/2}\right|&=&O(R^{1/4}),  \label{ab:p}
\end{eqnarray}
\begin{eqnarray}
\left|\Qscr(R) - \frac{1}{\sqrt{5}} C_\infty^{1/2} R^{5/2}\right|&=&O(R^{9/4}), \label{ab:q}\\
\left| \Escr(R)-\frac{\sqrt{5}C_\infty^{1/2}}{2 R^{3/2}} \right| &=& O(R^{-7/4}) \label{ab:e}, 
\\
\left|\frac{1}{2\pi}\int_{S^1}\eta(R,\theta') d\theta'-\eta(R,\theta)\right| &=& O( R^{-1/2}), \label{ac:e}
\end{eqnarray}
\begin{eqnarray}
\left|K^2 e^{2\eta}(R,\theta) -R^2\right|&=&O(R^{7/4}), \label{ab:et}
\\
\left| a^{-1}(R,\theta) \Pscr^{-1}(R)-\mathcal{L}(\theta) \right|&=&O(R^{-1/2}),\label{ab:a} \\
\left| \frac{1}{2\pi}\int_{S^1}U(R,\theta) d\theta-U(R,\theta)\right|&=& O(R^{-1/2}), \label{ab:Ut}
\end{eqnarray}
\begin{eqnarray}
\left|U(R,\theta)-C_U\right|&=&O(R^{-1/2}), \label{ab:U} \\
\left| H(R,\theta) - \frac{4}{K\sqrt{5}}C_\infty^{1/2}R^{1/2}\mathcal{L}(\theta) \right|&=&O(R^{1/4}),\label{ab:H}
\end{eqnarray}
where $C_\infty>0$ and $C_U$ are constants depending on the solution  and $\mathcal{L}(\theta)$ is  a $W^{1,1}(S^1)$ strictly positive function. 
\end{theorem}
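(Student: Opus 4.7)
The plan is to treat the conclusion of Section \ref{se:sdp} as the starting point: the bootstrap has closed, so $\mathcal{B}=[R_0,+\infty)$, and we have the uniform bounds $|u|\le C\epsilon^{1/2}$, $C_1/2\le R\Gcal\le 3D_0/2$, and $|\Fcal/\Gcal-1|\le C\epsilon^{1/2}$. I would now feed these back into the formulas of Section \ref{se:dsrq} to extract \emph{sharp} rates, reading the conclusions in the order $u\to R\Gcal\to \Pscr,\Qscr,\Escr\to \eta,\,e^{2\eta},\,a\to U,\,H$. For \eqref{ab:u} I would revisit the variation-of-constants formula \eqref{es:u} for $u=(c_1,d_1)^T$: the linearized matrix $A$ has eigenvalues $-\tfrac14\pm i\sqrt{39}/4$, so the homogeneous part decays like $(R_0/R)^{1/4}$, while the factor $(R\Gcal)^{-1/2}$ is uniformly bounded above and below. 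Substituting the rough rates $\Pscr\lesssim R^{1/2}$, $\Qscr\lesssim R^{5/2}$, $\Qscr_R\lesssim R^{3/2}$, $\Gcal\sim R^{-1}$ into \eqref{ineq:o1e}, \eqref{es:od3}, \eqref{es:od4}, each error $\Omega_1^{c,d},\,\Omega_3^d,\,\Omega_4^d$ is of size $O(R^{-3/2})$; the quadratic term $\Omega_2^d$ is handled first by the $|u|\le C\epsilon^{1/2}$ bound and then re-improved a posteriori. Since $\int_{R_0}^R(R'/R)^{1/4}R'^{-3/2}\,dR'=O(R^{-1/4})$, \eqref{ab:u} follows.

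For \eqref{ab:g}, I would return to the two-sided differential inequalities for $R\Gcal$ at the end of Section \ref{se:mef}: using the refined rates above and $\Escr\sim R^{-3/2}$ (which itself follows from $\Fcal=\Gcal(1+O(R^{-1/2}))$), each driving term $\Pscr\Fcal/R^2$, $\Fcal\Qscr_R/R^2$, $\Fcal\Pscr_R$ decays like $R^{-3/2}$, which is integrable in $R$ with tail $O(R^{-1/2})$. This gives $R\Gcal\to C_\infty$ at rate $R^{-1/2}$, and $C_\infty>0$ because the lower bound \eqref{es:lbg} passes to the limit. From the definitions $\Pscr=cR\Gcal^{1/2}$ and $\Qscr=dR^3\Gcal^{1/2}$, the decay of $c_1,d_1$, and the asymptotics of $R\Gcal$, elementary algebra yields \eqref{ab:p} and \eqref{ab:q}; the energy asymptotics \eqref{ab:e} then follow from $\Escr=\Fcal/\Pscr$ together with \eqref{es:fg}.

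For the spatial profile of $\eta$, the constraint \eqref{weakconstraintsr2} gives $\|\eta_\theta\|_{L^1(S^1)}\le 2R\int_{S^1}|U_R U_\theta|\,d\theta\le R\Escr=O(R^{-1/2})$, which is \eqref{ac:e}. For \eqref{ab:et} I would combine $\Qscr=(K^2/2)\int_{S^1}e^{2\eta}a^{-1}\,d\theta$ with the near-constancy of $\eta$ and the asymptotics of $\Qscr,\Pscr$. Integrating \eqref{eq:lnalphar} in $R$ then gives $a^{-1}(R,\theta)\Pscr^{-1}(R)\to\mathcal{L}(\theta)$, i.e.\ \eqref{ab:a}; strict positivity of $\mathcal{L}$ holds because $a>0$ is preserved along the flow and the exponent in the integral representation of $a$ is uniformly bounded in $\theta$. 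The averaged bounds \eqref{ab:Ut} and \eqref{ab:U} follow from Lemma~\ref{eq:mvar}, which gives $|\langle U_R\rangle|=O((R\Pscr)^{-1})=O(R^{-3/2})$, combined with a Poincar\'e--Wirtinger inequality controlling $U-\langle U\rangle$ by $\Pscr\Escr^{1/2}$; integrating in $R$ yields both. Finally, \eqref{ab:H} is obtained by direct integration of $H_R=Ka^{-1}e^{2\eta}/R^3$ using the asymptotics of $a^{-1}$ and $e^{2\eta}$ just established.

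The main obstacle is to establish the rates of $u$, $R\Gcal$, $\Pscr$, and $\Qscr$ \emph{simultaneously}: each enters the error estimates of the others, and the arguments must be sequenced carefully (or cast as a secondary bootstrap on refined rates) so that circular dependencies are broken. A second delicate point is to identify the \emph{precise} multiplicative constants $2/\sqrt{5}$, $1/\sqrt{5}$, $\sqrt{5}/2$, $4/(K\sqrt{5})$, which are dictated by the static point $(c_\infty,d_\infty)=(2/\sqrt{5},1/\sqrt{5})$ of the effective dynamical system and must be tracked consistently through every rescaling.
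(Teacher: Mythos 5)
Your overall route is the same as the paper's: once the bootstrap of Section \ref{se:sdp} is closed, re-run the variation-of-constants formula \eqref{es:u} and the corrected-energy inequalities with the sharp rates to get \eqref{ab:u} and \eqref{ab:g}, then read off $\Pscr,\Qscr,\Escr$ from the definitions of $c,d,\Gcal$, and finally propagate to $\eta, a, U, H$ in exactly the order you list. Your treatment of the quadratic term $\Omega_2^d$ (rough bound, then a posteriori reabsorption) matches the paper's Gronwall step, and the sequencing concern you raise at the end is precisely how the paper resolves the circularity.

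Two steps, however, are glossed over in a way that would fail as literally written. First, for \eqref{ab:a}: ``integrating \eqref{eq:lnalphar} in $R$'' does not converge, since $K^2e^{2\eta}/R^3\sim 1/R$ by \eqref{ab:et}; indeed $a$ itself decays like $R^{-1/2}$ and has no limit. What converges is the $\theta$-profile, and you must isolate it explicitly — the paper differentiates \eqref{eq:lnalphar} in $\theta$ and shows $(\ln a)_\theta$ converges in $L^1(S^1)$ because $K^2R^{-3}e^{2\eta}\eta_\theta$ is spacetime-integrable; equivalently one may note that $\partial_R\ln(a\Pscr)=\frac{K^2}{2R^3}\bigl(\aver{e^{2\eta}}-e^{2\eta}\bigr)=O(R^{-3/2})$. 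Your phrase ``the exponent in the integral representation is uniformly bounded in $\theta$'' is not a substitute for this cancellation. Second, for \eqref{ab:U}: the conservation law of Lemma \ref{eq:mvar} gives $\aver{U_R}=O(R^{-3/2})$, but $\frac{d}{dR}\aver{U}\neq\aver{U_R}$ — the averaging is with respect to the $R$-dependent measure $a^{-1}d\theta$. The commutator must be controlled via Lemma \ref{601}, whose right-hand side involves $\aver{|U|}$ itself, so one first needs an a priori bound on $U$; the paper obtains $|U|\lesssim\ln R$ from the flat average (whose derivative is only $O(1/R)$, hence not directly summable) and only then closes \eqref{ab:U}. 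Also, a small slip: the Poincar\'e-type bound for $U-\aver{U}$ in sup norm is $(\Pscr\Escr)^{1/2}=\Fcal^{1/2}=O(R^{-1/2})$, not $\Pscr\Escr^{1/2}$, which would only give $O(R^{-1/4})$ and would not suffice for \eqref{ab:Ut}.
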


\begin{proof}
Most of the above estimates are simply obtained by revisiting the proof in the previous section and checking that the error terms are now {\sl integrable.} 

For instance, in order to prove \eqref{ab:u}, note that from \eqref{es:u} and the  estimates of Section \ref{se:sdp}, we have
\begin{equation} \label{es:uar}
|u| \le C R^{-1/4} \left(1+ \int_{R_0}^R R'^{1/4} |\omega(R')|dR' \right).
\end{equation}
From \eqref{ineq:o1e} and \eqref{es:od4}, one can easily see that the contributions of $\Omega_{1}^c$, $\Omega_1^d$ and $\Omega_{4}^d$ are integrable in $R$. For instance, using an integration by parts,
\begin{eqnarray*}
\int_{R_0}^R \frac{\Qcal_R}{R'^{3-1/4}} &\le& C \frac{\Qcal}{R^{3-1/4}}+C \int_{R_0}^R  \frac{\Qcal}{R'^{4-1/4}} dR', \\
 &\le& C \frac{\Qcal}{R^3 \Gcal^{1/2}} (R\Gcal)^{1/2} R^{-1/4} + C \int_{R_0}^R\frac{\Qcal}{R'^3 \Gcal^{1/2}} (R'\Gcal)^{1/2}R'^{-5/4} dR', \\
&\le & C  R^{-1/4} + C \int_{R_0}^RR'^{-5/4} dR' \le C.
\end{eqnarray*}
For $\Omega_3^d$, it follows from \eqref{es:od3} and the estimates of the previous section that
$| \Omega_3^d | \le C R^{-3/2}$. Thus, its contribution to the integral of \eqref{es:uar} is integrable.
Since moreover, $|\Omega^d_2| \le \frac{C}{R} |u|^2$, \eqref{es:uar} has now been reduced to 
\begin{eqnarray}\label{es:fglu2}
|u| \le C R^{-1/4} \left( 1+ \int_{R_0}^R R'^{-3/4} |u|^2 (R')dR' \right).
\end{eqnarray}
Since we already know from the estimates of the previous section that  $|u| \le C \epsilon^{1/2}$, an application of Gronwall's lemma gives us the weak bound
$$
R^{1/4} |u| \le C R^{\epsilon^{1/2}}. 
$$
It then follows that $R^{-3/4} |u|^2 \le C R^{-5/4+\epsilon}$ and thus, for $\epsilon$ sufficiently small, \eqref{es:fglu2} now implies the desired estimate \eqref{ab:u}.

Similarly, to prove \eqref{ab:g}, first note that $\frac{d}{dR} \left( R\Gcal \right)$ is integrable, using the estimates of Section \ref{se:sdp} and \eqref{es:Rggp}. Thus, there exists a constant $C_\infty$ such that $R \Gcal \rightarrow C_\infty$, as $R \rightarrow +\infty$. Since $R \Gcal$ is uniformly bounded from below in view of \eqref{es:lbg}, we have $C_\infty> 0$. To get the rate of convergence, it then suffices to write $R \Gcal-C_\infty=\int_R^\infty \frac{d}{dR} \left( R' \Gcal \right) dR'$ and to estimate the integral as before.

Then, \eqref{ab:p}, \eqref{ab:q} and \eqref{ab:e} follow from the definitions of $\Pscr$, $\Qcal$ and $\Escr$.

For \eqref{ac:e}, using \eqref{weakconstraintsr2}, the simple estimate $F \le E$ and \eqref{ab:e}, we have, for all $R \ge R_0$ and $\theta \in S^1$, 
\begin{eqnarray*}
\left|\frac{1}{2\pi}\int_{S^1}\eta(R,\theta') d\theta'-\eta(R,\theta)\right| &\le& \int_{S^1} |\eta_\theta| (R,\theta')d\theta', \\
&\le& \int_{S^1} R F d\theta' \le \int_{S^1} R E \le C R^{-1/2}
\end{eqnarray*}
for some $C > 0$.
For \eqref{ab:et}, we use \eqref{ac:e}, \eqref{ab:p}, \eqref{ab:q} as well as
\begin{eqnarray*}
\Pscr \frac{K^2}{2} e^{2\eta}(R,\theta) &=& \int_{S^1} a^{-1}(R,\theta') \frac{K^2}{2} e^{2\left(\eta(R,\theta)-\eta(R,\theta')+\eta(R,\theta') \right)}d\theta', \\
&=& \int_{S^1} a^{-1}(R,\theta') \frac{K^2}{2} e^{2\left(\eta(R,\theta')+O(R^{-1/2}) \right)}d\theta'\\
&=&\Qcal\left(1+O(R^{-1/2}) \right).
\end{eqnarray*}

For \eqref{ab:a}, we first differentiate \eqref{eq:lnalphar} in $\theta$, that is, 
\begin{eqnarray}\label{eq:art}
\left( 2 \ln a \right)_{R \theta}=- \frac{K^2}{R^3}e^{2\eta} 2 \eta_\theta.
\end{eqnarray}
Note that the right-hand side is integrable in $L\left([R_0,+\infty) \times S^1 \right)$ since
\begin{eqnarray}\label{es:ilnart}
\int_{R_0}^\infty \int_{S^1}\left| \frac{K^2}{R^3}e^{2\eta} 2 \eta_\theta \right|d\theta dR \le \int_{R_0}^R C R^{-1} R \Escr \le C,
\end{eqnarray}
in view of \eqref{ab:e}.
This implies that $\left( \ln a \right)_{\theta}(R,\theta)$ converges in $L^1(S^1)$ as $R \rightarrow +\infty$ to some function $\mathcal{R}(\theta) \in L^1(S^1)$ and, moreover, we have the estimate
$$
|| \left( \ln a \right)_{\theta} -\mathcal{R} ||_{L^1(S^1)} = O(R^{-1/2}),
$$ by using \eqref{es:ilnart}.

Integrating over $[\theta,\theta']$, we get 
$$
\frac{a(R,\theta)}{a(R,\theta')} = \exp\left({\int_{\theta'}^\theta \mathcal{R}(\theta'') d \theta'' + O(R^{-1/2})}\right).
$$
Integrating again in the $\theta'$ variable, we get
$$
\left|a(R,\theta)\mathcal{P}- \int_{S^1} e^{\int_{\theta'}^\theta \mathcal{R}(\theta'') d \theta''} d\theta' \right| \le C \left( \exp{\left(O(R^{-1/2})\right)}-1 \right)=O(R^{-1/2}).
$$

For \eqref{ab:H}, it is sufficient to note that with the knowledge of the asymptotic behavior of $a$ and $\eta$ and equation \eqref{GHequa}, we can integrate $H_R$ directly and then compute the integral up to some error.

The property \eqref{ab:Ut} is an easy consequence of \eqref{ab:a}, \eqref{ab:p} and  \eqref{ab:e}.
For \eqref{ab:U}, we observe that 
$$
\left| \frac{d}{dR} \int_{0}^{2\pi} U d\theta \right|= \left| \int_0^{2\pi} U_R d\theta \right|  \le (2\pi)^{1/2} \left( \int_0^{2\pi} U_R^2 d\theta \right)^{1/2}
$$
and 
\begin{eqnarray*}
\left( \int_0^{2\pi} U_R^2 d\theta\right)(R) =\left(\int_0^{2\pi}a^{-1} a U_R^2 d\theta\right)(R) &\le& \sup_{[0, 2\pi]} a (R,\theta) \int_0^{2\pi}a^{-1} U_R^2 d\theta,\\
&\le& \left( \frac{1}{\Pscr}+o(a)\right)\frac{1}{\mathcal{L}(\theta)}\int_0^{2\pi}a^{-1} U_R^2 d\theta, \\
&\le&  \frac{C}{R^2}
\end{eqnarray*}
for some $C > 0$. Here, we have used \eqref{ab:a}, together with the fact $\mathcal{L}$ is bounded away from zero uniformly, as well as \eqref{ab:p} and \eqref{ab:e}. 

This implies that 
$$
\left| \frac{d}{dR} \int_{0}^{2\pi} U d\theta \right| \le \frac{C}{R}
$$
and by integration and \eqref{ab:Ut}, we obtain the rough bound on $U$
$$
|U| \le C \ln R.
$$
Applying now the commutator estimate from Lemma \ref{601}, we have that
\begin{eqnarray}\label{ineq:uas}
\left| \frac{d}{dR}\aver{U}- \aver{U_R} \right| \le  \frac{ \pi K^2}{R^3} \, \aver{|U|} \big\| \big( e^{2\eta} \big)_\theta \big\|_{L^1(S^1)}.
\end{eqnarray}
From the above rough bound on $U$, we have 
$$| \aver{|U|}| \le C \ln R.$$
Moreover, one can estimate  $\big\| \big( e^{2\eta} \big)_\theta \big\|_{L^1(S^1)}$ as before, to get
$$\big\| \big( e^{2\eta} \big)_\theta \big\|_{L^1(S^1)} \le C R^{3/2}.$$

Thus the right-hand side of \eqref{ineq:uas} is integrable in $R$. 
Since, moreover, 
$$
\aver{U_R}=\frac{1}{\Pscr} \int_0^{2\pi} U_R a^{-1}(R,\theta) d\theta= \frac{R_0}{\Pscr R} \int_0^{2\pi} U_R a^{-1}(R_0,\theta) d\theta,
$$
using the conservation law in Lemma \ref{eq:mvar}, it follows that $\aver{U_R}$ and, therefore, $\frac{d}{dR}\aver{U}$ are integrable. By having checked the convergence of all the integrals involved in our analysis, this 
completes the proof of \eqref{ab:U} and, thus, of Theorem \ref{th:mt}. 
\end{proof}

%===========================================================================

\section{Future geodesic completeness} \label{se:gc}

In this section, we complete the proof of the geodesic completeness property under the smallness assumption \eqref{ineq:gah}-\eqref{ida:g}. There are only small modifications in comparison to the proof already presented by the authors in \cite{LeFlochSmulevici2} for weakly regular Gowdy spacetimes. One of difficulties (observed and solved in \cite{LeFlochSmulevici2}) is that, with limited control of the Christoffel symbols in the 
$L^1$ or $L^2$ norms (in space) only, the local existence of geodesics is not guaranteed by the
 standard Cauchy-Lipschitz theorem. Instead, we first established that the the Christoffel symbols 
admit traces along timelike curves,
and we relied on a compactness argument `a la Arzela-Ascoli in order to establish the existence of 
geodesics. This part of the analysis can be repeated here almost identically in our $T^2$ setting, 
by using the estimates in \cite{LeFlochSmulevici1} for the compactness argument. (This compactness is required
in the proof of existence of traces , as explained in Proposition 3.5 of \cite{LeFlochSmulevici2}).
We do not repeat these arguments here and directly assume the existence of geodesics (which, for instance, is 
immediate in the smooth case). 

\begin{theorem}[Future geodesic completeness]
 \label{th:gc}
Let $(\mathcal{M},g)$ be a non-flat, polarized $T^2$-symmetric vacuum spacetime with weak regularity 
whose initial data set satisfies the conditions \eqref{ineq:gah}-\eqref{ida:g}. 
Then, all future timelike geodesics are future complete. 
\end{theorem}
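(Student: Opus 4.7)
The plan is to adapt the approach developed by the authors in \cite{LeFlochSmulevici2} for Gowdy spacetimes to the polarized $T^2$ setting, using the sharp late-time asymptotics from Theorem \ref{th:mt} as input. Accepting, as the statement allows, the existence of maximal future-directed timelike geodesics $\gamma:[0,s_*)\to\mathcal{M}$ (with $s_*\in(0,+\infty]$) and assuming for contradiction that $s_*<+\infty$, the goal is to derive a polynomial upper bound $\dot R\lesssim R^{-1}$ along $\gamma$, which forces $s_*=+\infty$.

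First I would exploit the Killing fields $\partial_x,\partial_y$: the conserved momenta
\[
P_x := g(\dot\gamma,\partial_x) = e^{2U}(\dot x+G\dot\theta),\qquad
P_y := g(\dot\gamma,\partial_y) = e^{-2U}R^2(\dot y+H\dot\theta)
\]
allow elimination of $\dot x,\dot y$ in the normalization $g(\dot\gamma,\dot\gamma)=-\kappa$ (with $\kappa\ge 0$), yielding the master identity
\[
e^{2(\eta-U)}\bigl(\dot R^2-a^{-2}\dot\theta^2\bigr) = \kappa + e^{-2U}P_x^2+e^{2U}R^{-2}P_y^2 =: V(R,\theta).
\]
Since $U$ is uniformly bounded by \eqref{ab:U}, $V$ is bounded by a constant $V_\infty$ depending only on $\gamma$, and in particular $\dot R\ge a^{-1}|\dot\theta|>0$, so the area $R$ is monotone along $\gamma$ and may be used as a parameter. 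To close the argument the missing ingredient is an upper bound on $\dot R$, which through the master identity reduces to an upper bound on the ``angular velocity'' $a^{-2}\dot\theta^2$.

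The crux is then to control $\dot\theta$ via the Euler--Lagrange equation for the $\theta$-coordinate. Writing it in the form
\[
\frac{d}{ds}\bigl[e^{2(\eta-U)}a^{-2}\dot\theta+GP_x+HP_y\bigr] = \frac{\partial L}{\partial\theta},
\]
the right-hand side is a sum of terms involving $\eta_\theta,U_\theta,a_\theta$ and the conserved momenta. Using \eqref{ac:e} (so $\eta_\theta$ has $L^1$-norm of order $R^{-1/2}$), \eqref{ab:a}, the boundedness of $U$ and $U_\theta$ consequences of \eqref{ab:Ut}--\eqref{ab:U}, and the explicit asymptotics \eqref{ab:et}--\eqref{ab:H} for $e^{2\eta}$ and $H$, one checks that each contribution to $\frac{d}{ds}[e^{2(\eta-U)}a^{-2}\dot\theta+GP_x+HP_y]$ is $L^1$ in $R$ along $\gamma$ (after exchanging $ds$ for $dR$ via $ds=dR/\dot R\le C\,R\,dR$ from a preliminary crude bound). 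This yields a uniform bound on $e^{2(\eta-U)}a^{-2}\dot\theta$, hence in view of $e^{2(\eta-U)}a^{-2}\sim R^3\mathcal{L}(\theta)^2$ a decay $|\dot\theta|=O(R^{-3})$.

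Plugging this back into the master identity gives $e^{2(\eta-U)}\dot R^2\le V_\infty+O(R^{-3})$, hence $\dot R\le C R^{-1}$ for $R$ large. Consequently along $\gamma$ one has $(R^2)\dot{}\le 2C$, so $R^2(s)\le R_0^2+2Cs$, and $R$ cannot escape to $+\infty$ in finite affine parameter. If $s_*<+\infty$, this means $R$ stays in a compact interval, and all metric coefficients and their relevant derivatives along $\gamma$ remain bounded by the estimates of \cite{LeFlochSmulevici1}; the standard (or trace-based, in the weakly regular case) local extension result for geodesics then prolongs $\gamma$ past $s_*$, contradicting its maximality. The null case $\kappa=0$ is handled identically, noting that for a null geodesic not tangent to a Killing orbit, $P_x^2+P_y^2>0$ keeps $V$ uniformly positive. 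The main obstacle in this program is the $\dot\theta$ bound: the coefficient $e^{2(\eta-U)}a^{-2}$ grows like $R^3$, so the right-hand side of the $\theta$-evolution equation must be shown to be integrable in $R$ with some margin to spare, and this relies essentially on the sharp asymptotics of Theorem~\ref{th:mt}, in particular on the polynomial rates for $e^{2\eta}$, $a$ and $H$ rather than merely their leading-order behavior.
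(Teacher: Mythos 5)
Your overall scheme (reduce completeness to a sub-linear bound on $\dot\xi^R$, use the Killing momenta and the mass-shell relation) starts out parallel to the paper, but the core of your argument --- bounding $\dot\theta$ through the Euler--Lagrange equation for $p_\theta=e^{2(\eta-U)}a^{-2}\dot\theta+GP_x+HP_y$ and then feeding this back into the master identity --- is a genuinely different route from the paper's, which instead works directly with the $\ddot\xi^R$ equation. Unfortunately, the decisive step of your route has a genuine gap. The source term $\tfrac12\,\partial_\theta g_{\mu\nu}\,\dot\xi^\mu\dot\xi^\nu$ contains, among others, the contributions $-(\eta_\theta-U_\theta)\,V$, $U_\theta e^{-2U}P_x^2$, and $-e^{2(\eta-U)}a^{-3}a_\theta\,\dot\theta^2$, i.e.\ it involves the quantities $\eta_\theta$, $U_\theta$, $U_R$ and $a_\theta$ evaluated \emph{pointwise along the geodesic}. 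Theorem \ref{th:mt} (and the energy estimates behind it) only controls these in averaged norms over $S^1$ at each fixed $R$: $\eta_\theta=2RU_RU_\theta$ is bounded in $L^1(S^1)$ by $R\Escr$, $U_\theta$ and $U_R$ only in $L^2(S^1)$, and \eqref{ab:Ut}--\eqref{ab:U} give no pointwise bound on $U_\theta$ whatsoever (in the weakly regular class $U_\theta$ need not even be bounded). An $L^1(S^1)$ bound at each time does not yield integrability of these quantities along the one-dimensional curve $s\mapsto(R(s),\theta(s))$, because they may concentrate near the angular position of the geodesic. So the assertion that ``each contribution \dots is $L^1$ in $R$ along $\gamma$'' is unsubstantiated, and with it the claimed decay $|\dot\theta|=O(R^{-3})$ and the final bound $\dot\xi^R\le CR^{-1}$.

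This is exactly the difficulty the paper's proof is built to avoid. There, the constraint equations \eqref{weakconstraintsr1}--\eqref{weakconstraintsr2} are used to recognize that the combination $(\eta_R-U_R)(dR^2+a^{-2}d\theta^2)+2(\eta_\theta-U_\theta)\,dR\,d\theta$, after adding the small correction $(\tfrac1{4R}+\tfrac{K^2}{4R^3}e^{2\eta})(dR^2+a^{-2}d\theta^2)$, is a \emph{non-negative} quadratic form (a perfect square in $U_R-\tfrac1{2R}$ and $aU_\theta$); hence the uncontrolled first-derivative terms enter the $\ddot\xi^R$ equation with a favorable sign and can be discarded, leaving only terms of size $K^2e^{2\eta}/R^3\sim 1/R$ times $(\dot\xi^R)^2$ plus remainders in $\mu_R^{1/2}$ that are handled by a monotonicity argument for $\rho=\eta-U$ along the geodesic (precisely because $\mu_R\ge0$), not by $L^1(S^1)$ estimates. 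If you want to pursue your $\dot\theta$-based route, you would need to exhibit an analogous sign or perfect-square structure in $\partial_\theta g_{\mu\nu}\dot\xi^\mu\dot\xi^\nu$, or trade the offending factors for exact derivatives along the curve; as written, the step does not follow from the stated asymptotics. A secondary, fixable issue: the claimed crude bound $ds\le C R\,dR$ does require the lower bound $\dot\xi^R\ge\sqrt{\kappa}\,e^{-(\eta-U)}\gtrsim R^{-1}$ from the mass-shell relation together with \eqref{ab:et}, which you should state; and the discussion of null geodesics is not needed since the theorem concerns timelike geodesics only.
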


\begin{proof} 
For simplicity in the presentation, we focus on the smooth case. 
Let $\xi$ be a future maximal timelike geodesic defined on an interval $[s_0,s_1)$. We have $g(\dot \xi, \dot \xi) < 0$ and
\be
\ddot \xi^\alpha+\Gamma^\alpha_{\beta \gamma} \dot \xi^\beta \dot \xi^\gamma=0.
\ee
Following \cite{LeFlochSmulevici2}, we observe that, since $X$ and $Y$ are Killing fields, $J_X=g(\dot \xi,X)$ and $J_Y=g(\dot \xi,Y)$ are constant along $\xi$, so that 
$J_X=e^{2U}\left( \dot \xi^X+ G \dot \xi^\theta \right)$ and $J_Y=e^{-2U} R^2 \left( \dot \xi^Y+ H \dot \xi^\theta \right)$ are constants along $\xi$.
We  use the same strategy as in Section 4 of \cite{LeFlochSmulevici2}. First, by standard arguments (see Lemma 4.10 in \cite{LeFlochSmulevici2}), it follows that $R(\xi(s)) \rightarrow +\infty$ as $s \rightarrow s_1$. Then, since $R(\xi(s)) -R(\xi(s_0)) = \int_{s_0}^{s} \dot \xi^R ds$, it follows that any bound of the form $\dot \xi^R < C R^p$ for $p < 1$ implies that $s_1=+\infty$. Note also that since $R(\xi(s)) \rightarrow +\infty$, given any $R' > 0$, we may assume, without loss of generality, that $R(\xi(s_0)) \ge R'$.

We now analyze the structure of the equation satisfied by $\dot \xi^R$: 
\begin{equation}
\label{eq:xirdd}
\ddot \xi^R+\Gamma^R_{\beta \gamma} \dot \xi^\beta \dot \xi^\gamma=0.
\end{equation}
The term $\Gamma^R_{\beta \gamma} \dot \xi^\beta \dot \xi^\gamma=0$ is decomposed in the form 
$$
\Gamma^R_{\beta \gamma} \dot \xi^\beta \dot \xi^\gamma=\Gamma^R_{RR} \dot \xi^R \dot \xi^R+ \Gamma^R_{\theta\theta} \dot \xi^\theta \dot \xi^\theta+ 2\Gamma^R_{R\theta} \dot \xi^R \dot \xi^\theta+ 2\Gamma^R_{\theta a} \dot \xi^\theta \dot \xi^a +\Gamma^R_{a b}\dot \xi^a \dot\xi^b,
$$
where $a,b=X,Y$.
Recall now that 
\begin{eqnarray}
\Gamma^R_{RR}&=&\eta_R-U_R, \\
\Gamma^R_{\theta\theta}&=&\frac{\eta_R-U_R}{a^2}-\frac{a_R}{a^3}+e^{2U}U_R G^2 e^{-2(\eta-U)}+\left( e^{-2U}R^2 H^2 \right)_R \frac{e^{-2(\eta-U)}}{2}, \\
\Gamma^R_{R\theta}&=&\eta_\theta-U_\theta.
\end{eqnarray}
Observe also that 
$$
\eta_R-U_R= R \left( \left(U_R-\frac{1}{2R}\right)^2+a^2 U_\theta^2 \right)-\frac{1}{4R}-\frac{K^2}{4R^3}e^{2\eta},
$$
while 
$$
\eta_\theta-U_\theta=2R \left(U_R-\frac{1}{2R}\right) U_\theta.
$$
As a consequence, it follows that the following quadratic form inequality holds
\be
(\eta_R-U_R) (dR^2 +a^{-2}d\theta^2)+2 \left(\eta_\theta-U_\theta\right) dR d\theta+ \left( \frac{1}{4R}+\frac{K^2}{4R^3}e^{2\eta} \right) \left(dR^2+a^{-2}d\theta^2\right)\ge 0.
\ee

Returning now to \eqref{eq:xirdd}, this leads us to 
$$
\aligned
\ddot \xi^R 
& \le  \left( \frac{1}{4R}+\frac{K^2}{4R^3}e^{2\eta} \right) \left((\dot \xi^R)^2+a^{-2}(\dot \xi^\theta)^2\right)+\frac{a_R}{a^3}(\dot \xi^\theta)^2  
\\
& \quad -\left( e^{2U}U_R G^2 e^{-2(\eta-U)}+\left( e^{-2U}R^2 H^2 \right)_R \frac{e^{-2(\eta-U)}}{2}\right)(\dot \xi^\theta)^2  
-2\Gamma^R_{\theta a} \dot \xi^\theta \dot \xi^a -\Gamma^R_{a b}\dot \xi^a\dot \xi^b.
\endaligned
$$
Note that the term containing $\frac{a_R}{a^3}$ has the right-sign and can absorb the term $\frac{K^2}{4R^3}e^{2\eta}\dot (\xi^\theta)^2$. Using moreover the estimate \eqref{ab:et} and the fact that $|a^{-1}\dot \xi^\theta| \le  \dot\xi^R$, for all $\epsilon > 0$, we may assume that $R(\xi(s_0)$ is sufficiently large so that
$$
\ddot \xi^R \le  \left( \frac{3+\epsilon}{4 R} \right) (\dot \xi^R)^2-2\Gamma^R_{\theta a} \dot \xi^\theta \dot \xi^a -\Gamma^R_{a b} \dot \xi^a \dot \xi^b-\left( e^{2U}U_R G^2 e^{-2(\eta-U)}+\left( e^{-2U}R^2 H^2 \right)_R \frac{e^{-2(\eta-U)}}{2}\right)(\dot \xi^\theta)^2  .
$$

Recalling now that $\frac{d R}{ds}=\dot \xi^R$, the last inequality can be rewritten as
\be
\label{es:xir}
\aligned
& \frac{d}{ds}\left( R^{-3/4-\epsilon} \dot \xi^R \right) 
\\
& \le R^{-3/4-\epsilon} \left(-\left( e^{2U}U_R G^2 e^{-2(\eta-U)}+\left( e^{-2U}R^2 H^2 \right)_R \frac{e^{-2(\eta-U)}}{2}\right)(\dot \xi^\theta)^2 - 2\Gamma^R_{\theta a} \dot \xi^\theta \dot \xi^a -\Gamma^R_{a b} \dot \xi^a \dot \xi^b\right).
\endaligned
\ee
For the three terms in the right-hand side, recall that
\begin{eqnarray*}
\Gamma_{X\theta}^R&=&e^{-2\eta}e^{4U} U_R G,
 \\
\Gamma_{Y\theta}^R&=&\frac{e^{-2(\eta-U)}}{2} \left( e^{-2U} R^2 H \right)_R, 
\\
\Gamma_{XX}^R&=& e^{-2\eta}e^{4U} U_R,
 \\
\Gamma_{XY}^R&=&0,
 \\
\Gamma_{YY}^R&=&\frac{e^{-2(\eta-U)}}{2} \left( e^{-2U} R^2 \right)_{R}. 
\end{eqnarray*}
These terms can be combined with the terms containing $H^2$ and $G^2$ above arising from $\Gamma_{\theta \theta}^R$ as follows: 
\begin{eqnarray*}
\Gamma^{R}_{XX} (\dot{\xi}^X)^2+2\Gamma^{R}_{\theta X} \dot{\xi}^\theta \dot{\xi}^X + e^{2U} U_R G^2 e^{-2(\eta-U)} \left(  \dot{\xi}^\theta\right)^2&=&e^{-2(\eta-U)} e^{2U} U_R \left( \dot{\xi}^X + G \dot{\xi}^\theta \right)^2 \\
&=&e^{-2\eta} U_R \, J_X^2
\end{eqnarray*}
and 
$$
\aligned
& \Gamma^{R}_{YY} (\dot{\xi}^Y)^2+2\Gamma^{R}_{\theta Y} \dot{\xi}^\theta \dot{\xi}^Y +\left( e^{-2U}R^2 H^2 \right)_R \frac{e^{-2(\eta-U)}}{2} \left(  \dot{\xi}^\theta\right)^2
\\
& = \frac{e^{-2(\eta-U)}}{2} \left( \left( e^{-2U}R^2 \right)_R \left( \dot{\xi}^Y+ H \dot{\xi}^\theta \right)^2+e^{-2U}R^2 2 H H_R   \left(  \dot{\xi}^\theta\right)^2  +2 e^{-2U} H_R R^2 \dot \xi^\theta \dot \xi^Y \right)\\
\\
&= \frac{e^{-2(\eta-U)}}{2} \left(\left( e^{-2U}R^2 \right)_R R^{-4} e^{4U}\, J_Y^2+ 2 H_R \dot \xi^\theta J^Y \right).
\endaligned
$$

Now let $\mu=\eta-U+\frac{1}{4} \ln R-\frac{1}{2}\ln a$ and note that 
$$
\mu_R=R \left( \left(U_R-\frac{1}{2R}\right)^2+a^2 U_\theta^2 \right) \ge 0.
$$
Then, using that $U$ is uniformly bounded and \eqref{ab:et}, we easily have the estimates
\begin{eqnarray}
\left| e^{-2\eta} U_R \, J_X^2 \right| &\le&C R^{-2}\left( R^{-1/2} \mu_R^{1/2} +\frac{1}{R} \right),
  \\
\left|\frac{e^{-2(\eta-U)}}{2} \left( e^{-2U}R^2 \right)_R R^{-4} e^{4U}\, J_Y^2 \right| &\le& C R^{-4}\left( R^{-1/2} \mu_R^{1/2} +\frac{1}{R} \right)  
\end{eqnarray}
for some constant $C > 0$.
Moreover, in view of equation \eqref{GHequa}, \eqref{ab:et} and the estimate $|\dot \xi^\theta| \le a \dot \xi^R$, 
$$
\left| e^{-2(\eta-U)} H_R \dot \xi^\theta J^Y \right| \le C \frac{\dot \xi^{R} }{R^3}.
$$
Returning to \eqref{es:xir}, we obtain
$$
\frac{d}{ds}\left( R^{-3/4-\epsilon} \dot \xi^R \right) \le C R^{-13/4-\epsilon} \left(\mu_R^{1/2}+R^{-1/2}\right)+C \frac{\dot \xi^{R} }{R^3} .
$$
The second term in the right-hand side is integrable since $\dot \xi^{R}=\frac{dR(\xi(s)}{ds}$. Moreover, $R^{-13/4-\epsilon}R^{-1/2}$ is decreasing in $R$ and, therefore, integrable on any bounded interval $[s_0,s_1]$. Thus, it remains only to show that $R^{-13/4-\epsilon} \mu_R^{1/2}$ is integrable.

Let $M^2=-g(\dot \xi,\dot \xi)$. Then, we have 
$$
a^{-2} \left(\frac{\dot \xi^\theta}{\dot \xi^R} \right)^2 \le 1- \frac{M^2e^{-2(\eta-U)}}{\left(\dot \xi^R\right)^2}.
$$
Let $\chi=\frac{M^2e^{-2(\eta-U)}}{\left(\dot \xi^R\right)^2} \le 1$ and let $\rho=\eta-U$. Then, we find\footnote{We would like here to consider $\frac{d \mu}{ds}$, however, this would introduce the quantity $a_\theta$ for which we do not directly have an evolution equation.}
\begin{eqnarray}
\frac{d \rho}{ds} +1/4\frac{d}{ds}\left( \ln R \right) - \frac{a_R}{2a} \dot \xi^R &\ge& \left(1-(1-\chi)^{1/2}\right) \mu_R \dot \xi^R 
\\
&\ge&  1/2 \chi \mu_R \dot \xi^R.
\end{eqnarray}
In particular, $\frac{d \rho}{ds} +1/4\frac{d}{ds}\left( \ln R \right) - \frac{a_R}{2a} \dot \xi^R \ge 0$.
As a consequence, we have
$$
\mu_R \le 2 \left( \frac{d \rho}{ds} +1/4\frac{d}{ds}\left( \ln R \right) - \frac{a_R}{2a} \dot \xi^R \right)M^{-2} e^{2\rho} \dot \xi^R.
$$
Now, recall that from \eqref{ab:et}

$$
- \frac{a_R}{2a} = \frac{1}{4 R} + O( R^{-5/4}).
$$
In particular, there exists some $R_2 > 0$, such that for all $s$ such that $R(\xi(s)) > R_2$, 
$$- \frac{a_R}{2a} \le \frac{1+\epsilon}{4 R},$$
and we can assume that $R(\xi(s_0)) \ge R_2$.
Thus, we have
$$
\mu_R \le 2 \left( \frac{d \rho}{ds} +\frac{1+\frac{\epsilon}{2}}{2}\frac{d}{ds}\left( \ln R \right)  \right)M^{-2} e^{2\rho} \dot \xi^R,
$$
 where the quantity in the parentheses $\frac{d \rho}{ds} +\frac{1+\frac{\epsilon}{2}}{2}\frac{d}{ds}\left( \ln R \right) \ge 0$ is positive. 

Thus, we conclude that 
\begin{eqnarray*}
\mu_R^{1/2} &\le& \sqrt{2}M^{-1} \left( \frac{d \rho}{ds} +\frac{1+\frac{\epsilon}{2}}{2}\frac{d}{ds}\left( \ln R \right) \right)^{1/2} e^{\rho} \left( \dot \xi^R\right)^{1/2}, \\
&\le&
C \left( \frac{d \rho}{ds} +\frac{1+\frac{\epsilon}{2}}{2}\frac{d}{ds}\left( \ln R \right)  \right)e^{2\rho}+  C \dot \xi^R.
\end{eqnarray*}
It follows that 
$$
R^{-13/4-\epsilon} \mu_R^{1/2} \le C R^{-13/4-\epsilon}\left( \frac{d \rho}{ds} +\frac{1+\frac{\epsilon}{2}}{2}\frac{d}{ds}\left( \ln R \right)  \right)e^{2\rho} +C R^{-13/4-\epsilon}\dot \xi^R,
$$
where the last term is clearly integrable since $\dot \xi^R=\frac{dR(\xi(s))}{ds}$ and $13/4-\epsilon > 1$. Finally, using  \eqref{ab:et} and an integration by parts to estimate the term containing $\frac{d \rho}{ds}$, we have, for any $s \in [s_0,s_1)$
\begin{eqnarray*}
&&\int_{s_0}^{s} R^{-13/4-\epsilon}\left( \frac{d \rho}{ds} +\frac{1+\frac{\epsilon}{2}}{2}\frac{d}{ds}\left( \ln R \right)  \right)e^{2\rho} ds
\\
&&= \int_{s_0}^{s} R^{-13/4-\epsilon}1/2\frac{d e^{2\rho}}{ds}ds  +\int_{s_0}^{s} R^{-13/4-\epsilon}\frac{1+\frac{\epsilon}{2}}{2}\frac{d}{ds}\left( \ln R \right) e^{2\rho} ds
\\
&&\le C e^{2\rho} R^{-13/4-\epsilon} + C\int_{s_0}^{s} R^{-17/4-\epsilon} \dot \xi^R e^{2\rho} ds +C\int_{s_0}^{s} R^{-9/4} \dot \xi^R ds
\le C.
\end{eqnarray*} 

Thus, we have shown that $\frac{d}{ds} \left( R^{-3/4-\epsilon}\dot \xi^R \right)$ is integrable and, therefore, that $\dot \xi^R \le C R^{3/4+\epsilon}$, for some $C > 0$. This completes the proof of Theorem \ref{th:gc}. 
\end{proof}

%=====================================================================

\section{Existence of initial data sets close to the asymptotic regime}

In this section, we prove the following result. 
 
\begin{proposition}[Existence of a class of initial data sets] 
Fix $C_1> 0$ and $\mathcal{A} \in [0,+\infty)$. For any $\epsilon > 0$, there exists $R_0 > 0$, $(U_0,U_1) \in H^1(S^1) \times L^2(S^1), a_0 > 0 \in W^{2,1}(S^1)$ and $\eta_0 \in W^{1,1}(S^1)$ such that $\left(U_0,U_1, a_0,\eta_0\right)$ satisfies the constraint equation \eqref{weakconstraintsr2}, that is, 
\begin{eqnarray} \label{cs:data}
\partial_\theta(\eta_0)  = 2 R_0 \,  U_1 \partial_\theta(U_0)
\end{eqnarray}
and such that the conditions \eqref{ineq:gah}-\eqref{sm:g} are all satisfied with $U(R_0,\theta)=U_0(\theta)$, $U_R(R_0,\theta)=U_1(\theta)$, $\eta(R_0,\theta)=\eta_0(\theta)$ and $a(R_0,\theta)=a_0(\theta)$. 
As a consequence, there exists an non-empty set of initial data satisfying \eqref{ineq:gah}-\eqref{sm:g} which is open in the natural topology associated with the initial data on $H^1(S^1) \times L^2(S^1) \times W^{2,1}(S^1) \times W^{1,1}(S^1)$. 
\end{proposition}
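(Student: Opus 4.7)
The plan is to exhibit an explicit family of initial data, indexed by $R_0$ and a small auxiliary parameter $\mu > 0$, that satisfies the constraint \eqref{cs:data} and all of the conditions \eqref{ineq:gah}--\eqref{sm:g} with strict inequalities for $R_0$ large, and to deduce openness from the continuity of the relevant functionals.

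A first observation is that purely homogeneous data (all of $U_0$, $U_1$, $a_0$, $\eta_0$ constant) is \emph{incompatible} with \eqref{ineq:gah}: one then has Cauchy--Schwarz equality $\int a_0^{-1} U_R^2\,d\theta = \Pscr\langle U_R\rangle^2$, so $\Fcal(R_0) = \Acal/R_0^2$ and $\Gscr^U(R_0) = 0$, whence $R_0\Gcal(R_0) - \Acal/R_0 = 0$. A genuine spatial perturbation is thus mandatory, and should (i) strictly raise $R_0\Gcal(R_0) - \Acal/R_0$ above $C_1$, (ii) leave $\Acal$ unchanged, and (iii) preserve $\Gscr^U(R_0) = 0$ so that $\Gcal = \Fcal$ at $R_0$. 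All three requirements are met by perturbing only $U_0$ by a zero-mean function, because $\Acal$ depends only on $U_R = U_1$, while $\Gscr^U = R_0^{-1}\int(U_0 - \langle U_0\rangle)U_R a_0^{-1}\,d\theta$ reduces, when $U_R$ is constant, to a multiple of $\int(U_0 - \langle U_0\rangle)a_0^{-1}d\theta = 0$.

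Accordingly, fix $\psi \in C^\infty(S^1)$ with $\int_{S^1}\psi\,d\theta = 0$ and $\|\psi'\|_{L^2(S^1)}^2 = 1$, fix an auxiliary $\mu > 0$, and set
\[
a_0 \equiv \pi\sqrt{\tfrac{5}{C_1+\mu}}\,R_0^{-1/2}, \quad U_1 \equiv \tfrac{a_0\sqrt{\Acal}}{2\pi R_0}, \quad U_0(\theta) = \delta\,\psi(\theta), \quad \eta_0(\theta) = \tfrac{1}{2}\ln(R_0^2/K^2) + 2R_0 U_1 U_0(\theta),
\]
with $\delta = \sqrt{(C_1+\mu)/(2\pi)}\,R_0^{-1/2}$. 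The constraint \eqref{cs:data} is fulfilled by construction, the choice of $U_1$ enforces $R_0^2(\int a_0^{-1} U_1\,d\theta)^2 = \Acal$, and the scalings reproduce the asymptotic profile \eqref{444}. Direct computation yields $\Gscr^U(R_0) = 0$ and $\Fcal/\Gcal \equiv 1$ at $R = R_0$, together with
\[
R_0\Gcal(R_0) - \frac{\Acal}{R_0} = 2\pi\delta^2 R_0 = C_1 + \mu > C_1, \qquad \Gcal(R_0) + \frac{\Acal}{R_0^2} = \frac{C_1+\mu}{R_0} + \frac{2\Acal}{R_0^2},
\]
and, after expansion of the remaining algebraic expressions, $c_1(R_0), d_1(R_0) = O(1/R_0)$. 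Thus, for any fixed $\mu > 0$, all of \eqref{ineq:gah}--\eqref{sm:g} hold strictly, with room to spare, for every $R_0 \ge r$ with $r = r(C_1,\Acal,\epsilon,\mu)$ sufficiently large.

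For openness, the constraint \eqref{cs:data} defines a closed submanifold of $H^1(S^1) \times L^2(S^1) \times W^{2,1}(S^1) \times W^{1,1}(S^1)$, after restriction to the open set $\{a_0 > 0\}$ via the embedding $W^{2,1}(S^1)\hookrightarrow C^0(S^1)$. Using also $H^1(S^1)\hookrightarrow C^0(S^1)$ and $W^{1,1}(S^1)\hookrightarrow L^\infty(S^1)$, each of the functionals $\Escr, \Pscr, \Qscr, \Gscr^U$ at $R_0$---and hence $\Gcal, \Fcal, \Acal, c_1, d_1, \Fcal/\Gcal$---depends continuously on the data. Since the inequalities \eqref{ineq:gah}--\eqref{sm:g} hold strictly at our reference point, they persist on a neighborhood of it within the constraint submanifold, producing the claimed open non-empty set. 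The only genuinely substantive point is the identification of the right perturbation, captured by conditions (i)--(iii) above; everything else is routine.
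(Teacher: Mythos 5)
Your construction is essentially the paper's own: a constant $a_0\sim R_0^{-1/2}$ tuned so that $\Pscr\approx \tfrac{2}{\sqrt5}R_0\sqrt{\Gcal}$, a constant $U_1$ normalizing $\Acal$, a non-constant $U_0$ carrying Dirichlet energy $\sim C_1/R_0$ (which makes $\Gscr^U=0$, hence $\Fcal=\Gcal$), and $\eta_0$ obtained by integrating the constraint from a base value chosen so that $K^2e^{2\eta_0}\approx R_0^2$; all your computations check out. The only (welcome) refinements over the paper are the margin $\mu>0$, which turns the equality $R_0\Gcal(R_0)-\Acal/R_0=C_1$ of the paper's data into a strict inequality, and the explicit continuity argument for the openness claim, which the paper merely asserts.
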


While our construction require us to choose a sufficiently large $R_0$ (depending on $\epsilon$), the $\epsilon$ satisfying the assumption of Theorem \eqref{th:mt} depends only on a lower bound on $R_0$. Hence, the data constructed above satisfy the requirements of Theorem \eqref{th:mt} provided $R_0$ is chosen sufficiently large.

\begin{proof}
Let $C_1 > 0$ and $\mathcal{A} \in [0,+\infty)$ be fixed. We define $a_0$ to be 
$$
a_0= \frac{2\pi}{p R_0^{1/2}},
$$ 
where $p > 0$ is a constant. Thus the associated term $\Pscr$ reads $\Pscr=p R_0^{1/2}$. We then define $U_1$ as 
$$
U_1 = \pm\frac{\mathcal{A}^{1/2}}{p R_0^{3/2}}, 
$$
so that
$$
\left( R_0 \int_0^{2\pi} U_1 a_0^{-1} d\theta \right)^{2}=\mathcal{A}.
$$
Consider now any non-constant $U_0 \in H^1(S^1)$. We will impose several conditions on $U_0$.

Let $\Ecal=\int_{S^1} \left( a_0^{-1}U_1^2+a_0 (U_0)_\theta^2 \right) d \theta$ be the energy associated with our initial data set. Note that the energy correction\footnote{We would like to thank an anonymous referee for pointing out this nice simplification.} $\Gamma^U=\frac{1}{R_0}\int_{S^1}\left( U_0 -<U_0> \right) U_1 a_0^{-1} dR=0$ since $U_1a_0^{-1}$ is constant. 

Let $\mathcal{F}=\Pscr \Escr$ and $\mathcal{G}=\Pscr \left( \Escr+\Gamma^U \right) $ be the rescaled energy and the rescaled corrected energy associated with $U_0$, $U_1$ and $a_0$. Note that since $\Gamma^U=0$, $\mathcal{G}=\mathcal{F}$, so that \eqref{es:fg} trivially holds. 
Observe that 
$$ 
\Fcal= \Pscr \int_0^{2\pi}\left( a_0^{-1} U_1^2+ a_0 (U_0)_\theta^2\right) d\theta
= \frac{\mathcal{A}}{R_0^2} +2\pi \int_0^{2\pi} (U_0)_\theta^2 d\theta.
$$

Suppose now that $\int_0^{2\pi} (U_0)_\theta^2 d\theta= \frac{ C_1}{2 \pi R_0},$ where $C_1 > 0$.
Then, we have 
$$
\Fcal = \frac{\mathcal{A}}{R_0^2} + \frac{C_1}{R_0}. 
$$
In order to satisfy \eqref{ida:c}, we now fix $p$ in terms of $C_1$ by setting 
$$
p= \left(\frac{2C_1}{5}\right)^{1/2}.
$$
Then, we compute 
$$
|c_1|=\left| \frac{2}{\sqrt{5}} - \frac{\Pscr}{R_0 \Gcal^{1/2}} \right|=\left| \frac{2}{\sqrt{5}} -\frac{ \Pscr}{R_0 \Fcal^{1/2}} \right|.
$$ 
On the other hand, we have  
$$
\frac{\Pscr}{R_0 \Fcal^{1/2}}= \frac{p R_0^{1/2}}{R_0\left(\frac{\mathcal{A}}{R_0^2} + \frac{C_1}{R_0}\right)^{1/2}}=\frac{2}{\sqrt{5}} \left( 1 + \frac{\mathcal{A}}{ R_0 C_1 } \right)^{-1}.
$$
This shows that \eqref{ida:c} is satisfied provided that $\frac{\mathcal{A}}{ R_0 C_1 }$ is sufficiently small, which we can always ensure by choosing $R_0$ sufficiently large compared to $\frac{\mathcal{A}}{C_1}$.
 
One can then easily check that \eqref{ineq:gah} and \eqref{ida:g} hold true provided $R_0$ is sufficiently large. 
It remains to define $\eta_0$ so that \eqref{ida:d} and the constraint equation \eqref{cs:data} is satisfied. 
 
For \eqref{ida:d}, we only need to ensure that 
$\left| \frac{\Qcal}{R_0^3 \Fcal^{1/2}}-\frac{1}{\sqrt{5}} \right| \le \epsilon.$
Recall that $\Qcal=\int_{0}^{2\pi} \frac{K^2}{2} e^{2\eta_0} a_0^{-1} d\theta$.
By fixing $\eta_0(\theta=0)$ we can certainly ensure that 
$$
\frac{K^2}{2} e^{2\eta_0(0)} a_0^{-1}=\frac{1}{2 \pi \sqrt{5}} R_0^{3} \Fcal^{1/2}.
$$
Now we define $\eta_0$ for all other values of $\theta$ so that \eqref{cs:data} is satisfied
\begin{eqnarray*}
\eta_0(\theta)&=& \eta_0(0)+ 2 R_0 \int_0^\theta U_1 (U_0)_\theta d\theta,\\
&=& \eta_0(0)+ 2 R_0 U_1\left( U_0(\theta)-U_0(0)\right).
\end{eqnarray*}
From the above, we see that $\eta_0 \in W^{1,1}(S^1)$ (and in fact in $H^1(S^1)$)  and that
\begin{eqnarray*}
| \eta_0(0)-\eta_0(\theta) | &\le&\int_0^{2\pi} | \eta_\theta |\, d\theta \le  R_0 \frac{\Fcal}{\Pscr}
\\
 &\le&  \frac{1}{p R_0^{1/2}} R_0 \left( \frac{\mathcal{A}}{R_0^2}+ \frac{C_1}{R_0} \right)
\le \epsilon^2, 
\end{eqnarray*} 
again by choosing $R_0$ sufficiently large depending only on $C_1$ and $\mathcal{A}$.
We then check that 
$$
\aligned
\left| \frac{\Qcal}{R_0^3 \Fcal^{1/2}} - \frac{1}{\sqrt{5}} \right| 
&= \frac{1}{R_0^3\Fcal^{1/2}}\left| \Qcal - 2 \pi \frac{K^2}{2} e^{2\eta_0(0)}a_0^{-1} \right| 
\\
&\le  \frac{1}{R_0^3\Fcal^{1/2}}\frac{K^2}{2} e^{2\eta_0(0)} a_0^{-1} \int_0^{2\pi} | e^{2(\eta_0(\theta)-\eta_0(0))}-1 |\, d\theta 
 \le  C \epsilon^2 \le \epsilon, 
\endaligned
$$
provided $\epsilon$ is sufficiently small. 
\end{proof}

%================================================================== 

\section*{Acknowledgments}

The authors are very grateful to Hans Ringstr\"om for stimulating discussions on this subject and useful comments on a first version of this paper, and are also thankful to an anonymous referee for many constructive remarks.
Part of this paper was written in the Fall Semester 2013 when the first author (PLF) was a visiting professor at the Mathematical Sciences Research Institute (Berkeley) thanks to the support of the 
National Science Foundation under Grant No. 0932078 000. 
Part of this paper was written while the second author (JS) was a member of the 
Max Planck Institute for Gravitational Physics (Albert Einstein Institute). 
The authors were also partially supported by the Agence Nationale de la Recherche through the grant ANR SIMI-1-003-01.

%======================================================

\small

\end{document}